\newcommand{\reals}{\mathbb{R}}
\def\eps{\varepsilon}
\newcommand{\frechet}{Fr\'echet }
\newcommand{\etal}{ et al.}
\let\eps\varepsilon
\newcommand{\denselist}{\itemsep 0pt\parsep=0.8pt\partopsep 0pt}
\newcommand{\bitem}{\begin{itemize}\denselist}
\newcommand{\eitem}{\end{itemize}}
\newcommand{\benum}{\begin{enumerate}\denselist}
\newcommand{\eenum}{\end{enumerate}}
\newcommand{\overbar}[1]{\mkern 1.5mu\overline{\mkern-1.5mu#1\mkern-1.5mu}\mkern 1.5mu}
\newtheorem{theorem}{Theorem}	
\newtheorem{lemma}[theorem]{Lemma}
\newtheorem{definition}{Definition}
\newtheorem{remark}{Remark}
\title{Time Window \frechet and Metric-Based Edit Distance for Passively Collected Trajectories}
\author[1]{Jiaxin Ding\thanks{Email:jiaxinding@sjtu.edu.cn}}
\author[2]{Jie Gao}
\author[3]{Steven Skiena}
\affil[1]{Shanghai Jiao Tong University}
\affil[2]{Rutgers University}
\affil[3]{Stony Brook University}
\date{}                     
\date{}
\begin{document}
\maketitle

\begin{abstract}
The advances of modern localization techniques and the wide spread of mobile devices have provided us great opportunities to collect and mine human mobility trajectories. In this work, we focus on passively collected trajectories, which are sequences of time-stamped locations that mobile entities visit. To analyse such trajectories, a crucial part is a measure of similarity between two trajectories. 
We propose the time-window \frechet distance, which 
enforces the maximum temporal separation between  points of two trajectories that can be paired in the calculation of the \frechet distance, and the metric-based edit distance which incorporates the underlying metric in the computation of the insertion and deletion costs.  
Using these measures, we can cluster trajectories to infer group motion patterns. We look at the $k$-gather problem which requires each cluster to have at least $k$ trajectories. We prove that $k$-gather remains NP-hard under edit distance, metric-based edit distance and Jaccard distance. 
Finally, we improve over previous results on discrete \frechet distance and show that there is no strongly sub-quadratic time with approximation factor less than $1.61$ in two dimensional setting unless SETH fails. 
\end{abstract}

\newpage


\section{Introduction}

Advances in localization techniques and wireless technologies have allowed for the collection of a huge volume of trajectories of pedestrians and vehicles. 
The rich knowledge in human mobility data provides great opportunities to 
mine interesting patterns, that can be useful for numerous applications, including but not limited to traffic management, urban planning, and the scheduling of autonomous vehicles. 
In many scenarios, mobile users' trajectories are collected passively, i.e., inferred from connection traces with WiFi access points, cellular towers, or through transactions of credit cards or transit cards in public transportation systems.
These records consist of users' locations and the corresponding time stamps, so the sequence of records produces a good approximation of mobile user trajectory. 
Prior studies of trajectories collected passively include the study of human mobility traces of over $100,000$ mobile phone users~\cite{gonzalez08understanding} and the study of about $95,000$ users in a large university campus~\cite{mobicamp16}. 


Trajectories collected through passive sensing are unique in many ways. First the collected trajectories respect the density of wireless checkpoints (WiFi, cellular towers, etc) and have much lower resolution than GPS traces. Missing data is the norm, for example, in regions without WiFi access points. It is an interesting problem to properly handle the time-stamp information. 
First, time stamp labels cannot be ignored but it is preferred to allow flexibility in handling time stamps. Accurate spatial temporal information (i.e., the location of a person at a particular time) can be considered sensitive and private.
With long term trajectory data, frequent locations~\cite{gonzalez08understanding,song2010limits}, co-locations~\cite{Eagle08092009}, and specific patterns~\cite{citeulike:12207609,Ma:2010:PVP:1859995.1860017}  of users can be easily learned,  
which can be used to identify a user, breach their social ties and locate their whereabouts at any time. 
Thus location/time-stamp data in published data sets is often intentionally perturbed or generalized~\cite{Mokbel07}. Further, although human mobility shows great regularity and repetition, people have a fair amount of flexibility in daily routines. For example, it is common that a user goes to work every day but the time of leaving home may fluctuate.  

With passively collected trajectories, one can 
discover interesting mobility patterns by clustering trajectories into groups of similar ones, for which we need a measure of similarity.
There has been much work on  distances to measure similarity between curves. 
For example, Hausdorff distance measures the maximum distance of all points on one curve to their nearest point on the other curves. 
\frechet distance measures the minimum of maximum length between all possible pairing/coupling of points on the two curves.  

These distances for curves when applied to data mining in passively collected human trajectory data are not ideal.
First, all these distances focus on the \emph{shape} of the curves but ignore the time stamp information.  
But time dimension is important for understanding group mobility and traffic. Two individuals, traveling along the same route but at totally different time frames shall  not be considered similar. The duration spent at a location is a crucial attribute for important places for a user and the patten of visiting a sequence of locations is useful to infer the semantics of the trip.

In this paper we develop alternative similarity measures. We explored in two directions.
\begin{itemize}
\item \emph{Time-window \frechet distance.} In the first variant, we introduce time-window \frechet distance, where only pairs of points within a time window $\delta$ can be coupled. Further, we show an algorithm to compute the time window \frechet distance with running time related to the complexity of the input curves. 

\item \emph{Metric-based edit distance.} In the second variant, we take uniform sampling along the time dimension and use a combinatorial representation as a string of cells/towers/APs visited by the user and the edit distance -- the minimum number of changes (insertions or deletions) to turn one sequence to another. 
We consider a variant of edit distance by incorporating the underlying metric in the calculation of the edit distance. Specifically, the cost of inserting or deleting a symbol is based on the change of the metric distance before and after the operation. This will be able to handle the issue of missing data. We also discuss algorithms for computing the metric based edit distance. 
\end{itemize}

Next, we talk about clustering motion trajectories into meaningful clusters. We use the notion of $k$-gather clustering~\cite{aggarwal2006achieving}, which is to minimize the radius of any cluster, such that each trajectory is in a cluster of at least $k$ traces. 
The requirement of $k$ is needed to define group motion and often adopted to define popular/meaningful traffic patterns. There is also additional benefit of $k$-anonymity~\cite{Sweeney:2002:KAM:774544.774552} if only a summary of each cluster is reported/shared with the public -- one cannot tell a specific user from a group of $k$ trajectories. It is known that $k$-gathering for points in a metric space is NP-hard and there is a $2$-approximation algorithm~\cite{aggarwal2006achieving}. We extend the hardness proof to the case of trajectories under edit distance, metric based edit distance and the Jaccard distance.

Finally, we provide a hardness proof. It is recently shown that there is no subquadratic algorithm to compute \frechet distance unless SETH fails~\cite{bringmann2014walking} and later in \cite{bringmann2015approximability} it is shown that it is true for the approximation factor up to $1.399$ in one dimension. We show that in $\reals^2$ that there exists no subquadratic algorithm to approximate discrete \frechet distance with factor less than $1.61$ unless SETH fails. 


\section{Related Work}


\paragraph{Trajectory Analysis}
The study of spatial temporal patterns that summarize the collective behaviors of moving objects has been a subject of study under the name of \emph{sequential pattern mining}, such as Generalized Sequential Patterns (GSP)~\cite{Srikant96miningsequential}, PrefixSpan~\cite{914830}, SPAM~\cite{Ayres:2002:SPM:775047.775109}, Convoy Pattern Mining~\cite{Jeung:2008wn,Jeung:2008fz}. 
There is also prior work that considers the temporal relations between events, or patterns that contain both spatial and temporal information~\cite{Giannotti:2007ha,Liu:2012gz}. Giannotti et al.~\cite{Giannotti:2007ha} handled GPS trajectories and identify popular regions and then connected them with the temporal patterns. Liu et al.~\cite{Liu:2012gz} focused on trajectory pattern mining in noisy RFID data. 
The mined frequent trajectories are used for prediction or classification of the trajectories~\cite{5560657}, labelling of locations~\cite{Zheng:2009:MIL:1526709.1526816}, or deriving mobile user behaviors~\cite{5560659}. 

\paragraph{Fr\'echet Distance}
There was a rich literature for \frechet distance. Alt and Godau first introduced the Fr\'echet distance in \cite{alt1995computing}, and showed that the Fr\'echet distance can be computed in $O(n^2\log n)$ time
for two curves with $O(n)$ vertices.
Eiter and Mannila \cite{eiter1994computing} used dynamic programming to compute the discrete Fr\'echet distance in $O(n^2)$ time. Since then there has been a sequence of results trying to beat the quadratic running time. 
In 2007 Buchin et al. \cite{buchin2007difficult} gave a lower bound of $\Omega(n \log n)$ for deciding whether the Fr\'echet distance between two curves is smaller than or equal to a given value.
In 2014 Agarwal et al. \cite{agarwal2014computing} proposed a subquadratic algorithm for computing the \emph{discrete} Fr\'echet distance between two curves in $O(n^2\log\log n / \log n)$.
Buchin et al. \cite{buchin2014four} provided a randomized algorithm to compute the Fr\'echet distance between two polygonal curves in time $O(n^2\sqrt{\log n}(\log \log n)^{3/2})$ on a pointer machine and in time $O(n^2(\log \log n)^2)$ on a word RAM.
In the same year, Bringmann \cite{bringmann2014walking} proved that the discrete Fr\'echet distance can not be computed in strongly subquadratic time, namely $O(n^{2-\varepsilon})$ for any $\varepsilon>0$, with the assumption of Strong Exponential Time Hypothesis (SETH). 
Later Bringmann and Mulzer \cite{bringmann2015approximability} showed that it is impossible to approximate the discrete \frechet distance with a factor smaller than $1.399$ in $\reals^1$ in strongly subquadratic time  with the assumption of SETH.

Meanwhile, there has been a series of near-linear approximation algorithm with different assumptions of input curves.
Alt et al. \cite{alt2004comparison} showed that the Fr\'echet distance between two $\kappa$-straight curves is bounded by 
$\kappa+1$ times the Hausdorff distance, leading to a $(\kappa+1)$-approximation by calculating Hausdorff distance in $O(n\log n)$. 
Aronov et al. \cite{aronov2006frechet} presented a near-linear-time $(1+\varepsilon)$-approximation algorithm for computing the discrete Fr\'echet distance between $\kappa$-bounded curves and backbone curves in two dimension.
Driemel et al. \cite{driemel2012approximating} provided $(1+\varepsilon)$-approximation algorithm for $c$-packed curves in 
$O(cn/\varepsilon + cn \log n)$.  
Gudmundsson et al. \cite{Gudmundsson:2018:FFD} provided an $O(n\log n)$ algorithm for computing \frechet distances of curves with very long edges -- much longer than the \frechet distance. 
There are also variants of \frechet distance, such as geodesic \frechet distance~\cite{wenk2010geodesic} and homotopic \frechet distance~\cite{chambers2010homotopic}. 

Of most relevance to this paper is work that considered speed or temporal constraints.
Notice that in computing \frechet distance one can move infinitely fast which is probably not practical. 
Maheshwari~\cite{MAHESHWARI2011110} examined \frechet distance when the moving speed on each segment is a constant within a given range. 
Buchin et al.\cite{buchin2010constrained} proposed a generic framework for constrained free space diagram. The constraints can be temporal, distance, directional, or attributional. 
Our time-windowed \frechet distance is a constrained type as well, applied to specific temporal requirements in matching data points in passively sensed trajectories.

\paragraph{Dynamic Time Warping}
Dynamic Time Warping (DTW) \cite{berndt1994using} is similar to \frechet distance in the sense that it also considers monotonic coupling of points on the two curves. But DTW takes sum of distances between coupled points, instead of the maximum as in \frechet.
Constraints on the match, such as Sakoe-Chimba band \cite{sakoe1978dynamic}, can be added to dynamic time warping to satisfy specific requirements. 

\paragraph{Edit Distance }  
A passively collected trajectory is represented by a sequence of checkpoints visited by a mobile entity. It is natural to use edit distance. 
The original edit distance assigns the same cost for operations of insertion, deletion and substitution, which is not suitable for trajectory analysis. 
From a geometric perspective, the cost of an operation over a far-away point should have a higher impact than a near point on the total cost. 
Therefore, variants of edit distance with different cost functions are applied to trajectory analysis. 
Chen et al. \cite{chen2004marriage} proposed edit distance with real penalty (ERP), where the cost of substitution between two points is the distance between them, and the cost of insertion and deletion of a point is the distance between that point and a constant reference point.  
Chen et al. \cite{chen2005robust} also proposed edit distance on real sequence (EDR), where the cost of substitution between two point within a threshold is $0$, and the cost of all the other operations is $1$. 
Yuan and Raubal \cite{yuan2014measuring} developed a spatio-temporal edit distance, whose cost of all operations is the distance between the new centroid of points after the operation and the previous centroid, where time is also considered as one dimension. 

For trajectory analysis, most existing approaches requires a metric, since the triangle inequality can help with pruning for efficient indexing and clustering.  In the above variants, ERP is a metric while EDR and spatio-temporal edit distance are not. The problem for ERP is that the ERP between two trajectories changes when the absolute position of them are changed even if the relative position remains,  because of using the constant reference point.   

\paragraph{Trajectory Clustering} Clustering trajectories has been studied in a number of prior work. Buchin\etal~\cite{buchin2008detecting} considered finding clusters defined as long sub-trajectories with small \frechet distance between them. They proved hardness and provided approximation algorithms. 
Buchin \etal~\cite{buchin2017clustering} also applied subtrajectory clustering into map construction. 
The use of $r$-gather for clustering was initially proposed by Aggarwal et al.~\cite{aggarwal2006achieving} and then applied for time-parameterized trajectories~\cite{Zeng:2017:MRD}. 
Agarwal et al.\cite{agarwal2018subtrajectory} also proved NP-hardness and provided fast approximation for finding the optimal collection of subtrajectory clusters that best represents the trajectories. 
Gudmundsson \etal~\cite{gudmundsson2013algorithms} provided algorithms to cluster trajectories for  hotspot finding problems.

\section{Time-Window Metrics}
\subsection{\frechet Distance}

The \frechet distance is one of the most popular distance measure of two curves in space. It can be intuitively understood as a man traversing a finite curved path while walking his dog on a leash, with the dog traversing a separate path. Both the man and the dog need to walk forward but can take any speed at any time. The \frechet distance is the minimum length of the leash needed to finish the walk. An approximation and simpler version of the \frechet distance between polygonal curves is the discrete \frechet distance, which considers only positions of the leash whose endpoints are located at vertices of two polygonal curves. 
Now we start with the formal definitions.  

\begin{wrapfigure}{ro}{0.31\textwidth}
\vspace*{-6mm}
\centering
\subfigure[Two curves]{
\includegraphics[width=0.3\textwidth]{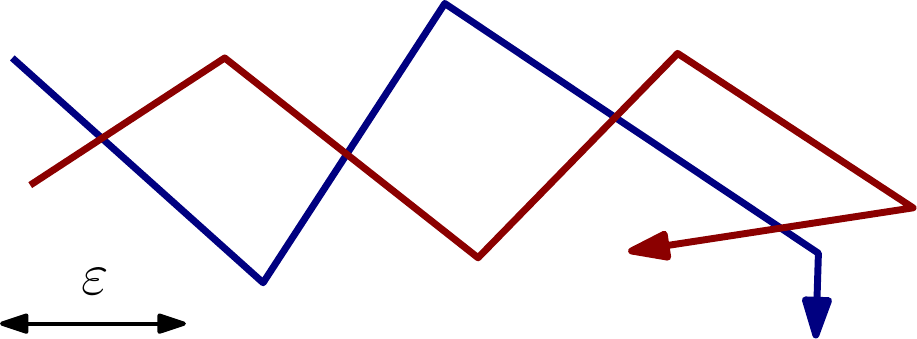}}\\
\subfigure[Free Space]{%
\label{fig:first}%
\includegraphics[width=0.3\textwidth]{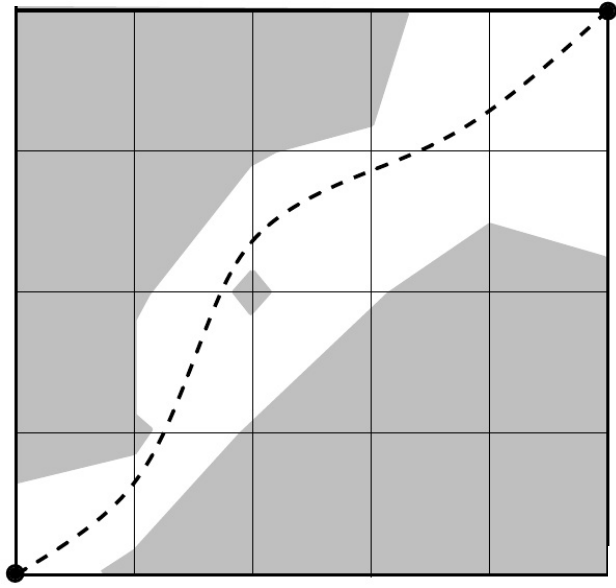}}
\vspace*{-2mm}
\caption{\small{Given $\varepsilon$, the white area in the left figure denotes the free space. The horizontal axis corresponds to the red curve and the vertical axis corresponds to the blue curve.} }
\label{Fig:freespace}
\vspace*{-4mm}
\end{wrapfigure}

A curve in $\reals^2$ is a continuous map $ [0,1]\rightarrow \reals^2$. A re-parameterization of $[0,1]$ is a continuous, non-decreasing, surjection $[0,1]\rightarrow [0,1]$. 
\begin{definition}[\frechet distance]
Let $A,B$ be two given curves in $\reals^2$, and $\alpha, \beta$ be re-parameterizations of $A,B$. The \frechet distance $\delta_F(A,B)$ is defined as
$$\delta_F(A,B) = \inf_{\alpha, \beta} \max_{s\in[0,1]} \left\lbrace d(A(\alpha(s)), B(\beta(s))) \right \rbrace , $$
where $d(\cdot, \cdot)$ is the distance between two points.  
\end{definition}

It is often easier to understand the \frechet distance between polygonal curves by the free space diagram. 
A polygonal curve $A$ is a continuous map $[0,n]\rightarrow \reals^2$, $A(i)=a_i\in \reals^2$ is the $i$th vertex.
The  map from interval $[i,i+1]$ to the $i$th line segment $A(i)A(i+1)$ of the curve is affine, $$A(s)=(1+i-s)a_i + (s-i)a_{i+1}, s\in [i,i+1].$$ 
We also represent $A$ as a sequence of its vertices $[a_0,a_1,\cdots, a_n]$. 

\begin{definition}[Free space~\cite{alt1995computing}] Free space between two polygonal curves $A$ and $B$ for a given distance $\varepsilon$ is defined as
$$D_{\leq\varepsilon}(A, B) = \left \lbrace (s, s') \in [0,n]\times[0,m]  \bigg | d(A(s),B(s'))\leq \varepsilon \right \rbrace .$$ 
\end{definition}

The free space is a two-dimensional region in the parameter space consisting of all points on the two curves with distance at most $\varepsilon$.  
A free space diagram is the free space along with vertical lines corresponding to vertices in $A$ and horizontal lines corresponding to vertices in $B$, as shown in Figure \ref{Fig:freespace}. 
These vertical and horizontal lines divide the free space into \emph{cells}. 
A cell $\mathcal{C}_{ij}=[i,i+1]\times [j, j+1]$ corresponds to all possible pairing between line segment $A(i)A(i+1)$ and $B(j)B(j+1)$.    
The \frechet distance between $A$ and $B$ is at most $\varepsilon$ if there is a path that is monotone in both $x$ (horizontal) and $y$ (vertical) dimension in the free space $D_{\leq\varepsilon}(A, B)$ from $(0,0)$ to $(n,m)$.  Given $\varepsilon$, we can decide whether there exists such a path in $O(nm)$ and we can find the \frechet distance using parametric search in $O(nm \log (nm))$~\cite{alt1995computing}. 


We can define the discrete \frechet distance on polygonal curves. 
\begin{definition}[Coupling/Traversal]
Given two polygonal curves represented by a sequence of vertices $A=[a_0,a_1,\dots, a_n], B =[b_0,b_1,\dots, b_m]$, we define a coupling $\beta=[c_0,c_1,\dots, c_l]$ as a sequence of pairs of points in $A$ and $B$, $c_k=(c_k[0], c_k[1])$, where $c_k[0]\in A, c_k[1]\in B$, such that 
 \begin{itemize}
 \item  $c_0=(a_0,b_0)$, $c_l=(a_n,b_m)$,
 \item if $c_k=(a_i,b_j)$, $c_{k+1}\in \{(a_i, b_{j+1}), (a_{i+1},b_j), (a_{i+1},b_{j+1})\}$.
 \end{itemize}
The distance of a coupling is the maximum distance among the pairs $c_k$ in $\beta$: $\max_{(a_i, b_j)\in \beta} d(a_i,b_j)$. The coupling is sometimes called traversal in the literature. 
\end{definition}

\begin{definition}[The discrete \frechet distance~\cite{eiter1994computing}]
The discrete Fr\'echet distance between $A$ and $B$ is defined as the minimum of  the maximum widths of all possible couplings of $A, B$: 
$$\delta_{dF}(A,B) = \min_\beta \max_{(a_i, b_j) \in \beta} d(a_i , b_j ) .$$ 
\end{definition}


It has been shown in \cite{eiter1994computing} that the discrete \frechet distance provides an upper bound for the \frechet distance and the difference between these two distances is bounded by the longest edge length of the polygonal curves. 
The discrete \frechet distance is sensitive to sampling rate of the curve. 
Usually, we do not want the sampling rate to affect our measure. So we focus on the continuous \frechet in the algorithms of this paper. At the very end we present results on the hardness of approximation for the discrete \frechet distance.

Both discrete and continuous \frechet distances are measures of \emph{shapes}, spatial information, of curves and do not take temporal information that could be associated with a trajectory. 
\subsection{Time-Window Fr\'echet Distance} 

In this section, we introduce time window \frechet distance to analyze human mobility trajectories. 
In the original \frechet distance, there are no constraints on the couplings with which we obtain the \frechet distance. This loses important information in the temporal dimension for comparing time-stamped trajectories. 

A trajectory is a continuous map from time $t\in [0,1]$ to $\reals^2$:  $[0,1]\rightarrow \reals^2$. 
W.l.o.g., we normalize time into the range $[0,1]$ and assume all trajectories start at time $0$ and end at time $1$. 
\begin{definition}[Time window \frechet distance]
Let $A, B$ be  two trajectories, and
$\alpha, \beta$ be the re-parameterization of $A, B$.  
The time window \frechet distance is defined as 
$$\delta_{tF}^{\sigma}(A, B) = \inf_{\alpha, \beta} \max_{s\in[0,1]} \left\lbrace d(A(\alpha(s)), B(\beta(s))) \right \rbrace$$ 
for any $s\in [0,1]$, we have $\left | \alpha(s)-\beta(s) \right | <\sigma$, where $\sigma$ is a parameter specifying which points on two trajectories can be paired. 
\end{definition}
Since we add constraints on the original \frechet distance, a valid re-parametrization for time window \frechet distance is a valid coupling for \frechet distance. Hence, we have the following lemma. 
\begin{lemma}
\begin{enumerate}
\item $\delta_{tF}^{\sigma}(A, B) \geq \delta_{F}(A,B).$
\item $\delta_{tF}^{\sigma}(A, B) \geq \delta_{tF}^{\sigma'}(A,B)$, if $\sigma \leq \sigma'$.
\end{enumerate}
\end{lemma}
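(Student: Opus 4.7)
The plan is to observe that both inequalities are instances of the general principle that taking an infimum over a smaller feasible set yields a larger (or equal) value, and then to identify the right set inclusion in each case.

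For part (1), I would let $F(A,B)$ denote the set of pairs of reparameterizations $(\alpha,\beta)$ of $A,B$ (i.e., continuous non-decreasing surjections $[0,1]\to[0,1]$), and let $F^{\sigma}(A,B) \subseteq F(A,B)$ denote the subset satisfying the additional pointwise constraint $|\alpha(s)-\beta(s)|<\sigma$ for every $s\in[0,1]$. The quantities $\delta_F(A,B)$ and $\delta_{tF}^{\sigma}(A,B)$ are then infima of the same functional $(\alpha,\beta)\mapsto \max_{s\in[0,1]} d(A(\alpha(s)), B(\beta(s)))$, taken over $F$ and $F^{\sigma}$ respectively. Since $F^{\sigma}\subseteq F$, the infimum over the smaller set is at least the infimum over the larger one, giving $\delta_{tF}^{\sigma}(A,B)\ge \delta_F(A,B)$.

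For part (2), the same argument applies with $F^{\sigma}\subseteq F^{\sigma'}$ whenever $\sigma\le \sigma'$: any $(\alpha,\beta)$ with $|\alpha(s)-\beta(s)|<\sigma$ automatically satisfies $|\alpha(s)-\beta(s)|<\sigma'$. Hence the infimum over $F^{\sigma}$ is at least the infimum over $F^{\sigma'}$, yielding $\delta_{tF}^{\sigma}(A,B)\ge \delta_{tF}^{\sigma'}(A,B)$.

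I do not anticipate any genuine obstacle — the proof is essentially bookkeeping about which reparameterizations are admissible. The only small thing to be careful about is writing the argument in a way that does not require the infima to be attained (they might not be, since the constraint uses a strict inequality), but this is automatic because the monotonicity of infima with respect to set inclusion holds for arbitrary non-empty subsets. A one-line remark that $F^{\sigma}$ is non-empty (e.g., the pair $\alpha=\beta=\mathrm{id}$ lies in $F^{\sigma}$ for every $\sigma>0$) will keep the statement non-vacuous.
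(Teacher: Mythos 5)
Your argument is correct and matches the paper's own justification, which is exactly the one-line observation that the time-window constraint only shrinks the set of admissible reparameterizations, so the infimum can only increase. The extra care about non-attainment of the infimum and non-emptiness of the constrained set is a nice touch but does not change the approach.
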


Trajectories in the real world settings are often produced by discrete samples taken by various localization techniques. 
Without further information, we assume that the mobile entities move along the polygonal curves determined by the sample points. In this section we consider two scenarios:
\begin{enumerate}
\item \textbf{Constant Speed.} A mobile entity travels at constant speed between two consecutive sample points, \label{const}
\item \textbf{Varying Speed.} A mobile entity may travel at varying speed from a sample point to the next.  \label{arbit}
\end{enumerate}

\begin{figure}[t!]
\centering
\subfigure[The constraints on free space diagram of trajectory $A,B$ under constant speed assumption]{
\label{Fig:constant}
\includegraphics[width=0.35\textwidth]{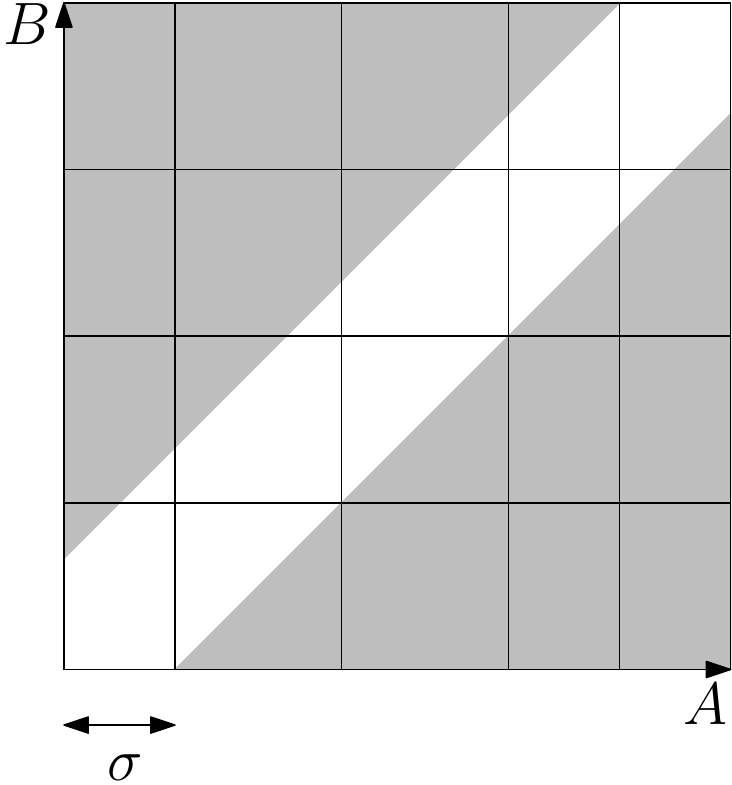}}
\hspace{5mm}
\subfigure[The constraints on free space diagram of polygonal curve $A',B'$ under arbitrary speed assumption]{
\label{Fig:arbitrary}
\includegraphics[width=0.35\textwidth]{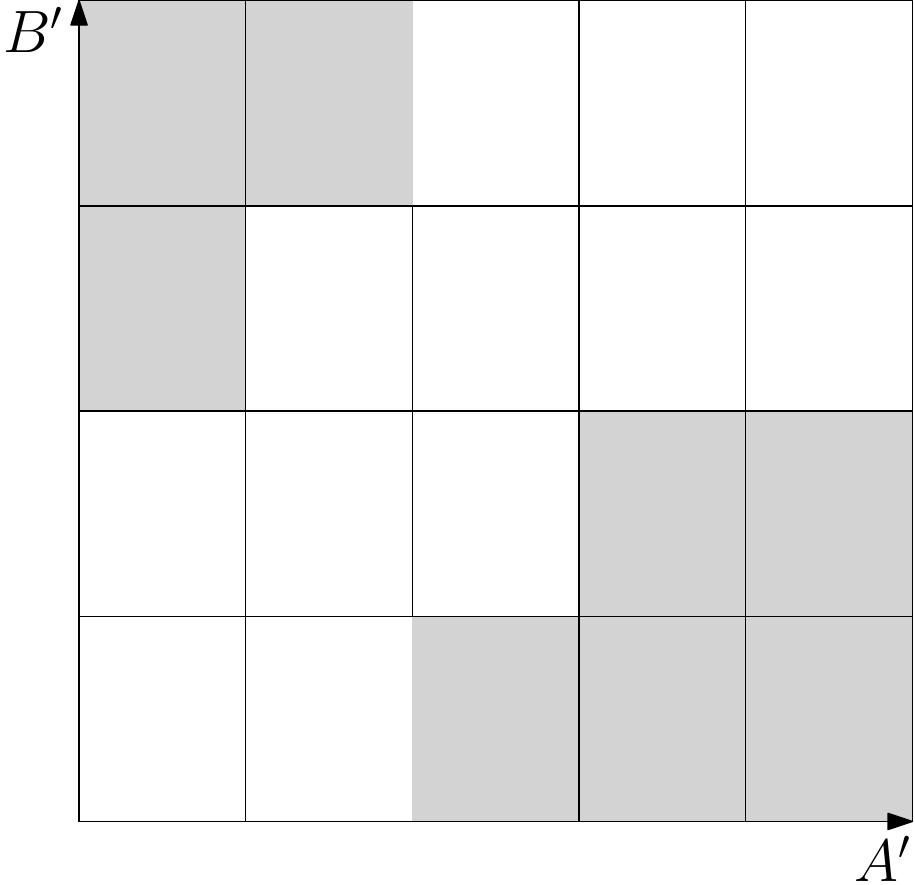}}
\caption{Constraints on the free space diagram for time window \frechet distance given $\sigma$.  }
\label{Fig:timewindowFreespace}
\end{figure}


Given two trajectories $A,B$ with discrete sample points, $A(t_0) = a_0, A(t_1)=a_1,\cdots, A(t_n)=a_n, 
B(t_0')=b_0,B(t_1')=b_1,\cdots, B(t_m')=b_m$, $a_i,b_j \in \reals^2$,
we first show the difference of the time window \frechet distance under constant speed and varying speed conditions and then present algorithms.

\smallskip\noindent\textbf{Constant Speed. }
If a mobile entity travels at constant speed between two consecutive sample points, we can run interpolation to get the location of the mobile entity at time $t$.

In the free space diagram, we simply use time as axes. 
Use $t$ to refer time used for trajectory $A$ and $t'$ be time of $B$. 
All pairs of $t,t'$ within time window $\sigma$, $|t-t'|\leq \sigma$, is demonstrated in Figure \ref{Fig:constant}, where the horizontal axis is $t$ of $A$ and the vertical axis is $t'$ of $B$. 
A valid re-parametrization with time window constraints, is represented as a monotone path within the white area decided by $|t-t'|\leq \sigma$ in the free space diagram. 
Hence, we have the following lemma. 

\begin{lemma}
Given time window $\sigma$, 
$\delta^\sigma_{tF}(A, B)\leq \varepsilon$  iff there is a monotone path from $(0,0)$ to $(1,1)$ within the space decided by $|t-t'|\leq \sigma$ in the free space diagram $D_{\leq\varepsilon}(A, B)$, where $t,t'$ are time of $A$ and $B$. 
\end{lemma}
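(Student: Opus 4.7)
The plan is to adapt the classical Alt--Godau free-space characterization of the \frechet distance to the time-window variant, with the sole additional ingredient being the diagonal strip $|t-t'|\le \sigma$ in the parameter space. I would prove the biconditional in two directions, each of which essentially projects between a monotone path in the parameter square $[0,1]\times[0,1]$ and a pair of re-parametrizations $(\alpha,\beta)$.

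For the forward direction, I would assume $\delta^\sigma_{tF}(A,B)\le \varepsilon$, so that for every $\eta>0$ there exist non-decreasing surjections $\alpha,\beta:[0,1]\to[0,1]$ with $d(A(\alpha(s)),B(\beta(s)))\le \varepsilon+\eta$ and $|\alpha(s)-\beta(s)|\le\sigma$ for all $s$. Define the curve $\pi(s)=(\alpha(s),\beta(s))$ in $[0,1]^2$. Then $\pi$ goes from $(0,0)$ to $(1,1)$, is monotone in each coordinate (since $\alpha,\beta$ are non-decreasing), lies in $D_{\le\varepsilon+\eta}(A,B)$ by the distance bound, and stays within $\{|t-t'|\le\sigma\}$ by the time-window bound. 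Letting $\eta\to 0$ (a standard compactness / limiting argument on monotone curves in a compact square) yields the required monotone path inside $D_{\le\varepsilon}(A,B)\cap\{|t-t'|\le\sigma\}$.

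For the backward direction, given a continuous monotone path $\pi:[0,1]\to[0,1]^2$ from $(0,0)$ to $(1,1)$ contained in $D_{\le\varepsilon}(A,B)\cap\{|t-t'|\le\sigma\}$, I would write $\pi(s)=(\pi_1(s),\pi_2(s))$ and set $\alpha=\pi_1$, $\beta=\pi_2$. Coordinate-wise monotonicity of $\pi$ makes $\alpha,\beta$ non-decreasing, and reaching $(1,1)$ together with continuity makes them surjections onto $[0,1]$; hence they are valid re-parametrizations. The free-space condition gives $d(A(\alpha(s)),B(\beta(s)))\le\varepsilon$ for all $s$, and the strip condition gives $|\alpha(s)-\beta(s)|\le\sigma$, so taking the supremum over $s$ yields $\delta^\sigma_{tF}(A,B)\le\varepsilon$.

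The only mild technicality — not really an obstacle — is the mismatch between the strict inequality $|\alpha(s)-\beta(s)|<\sigma$ used in the definition of $\delta^\sigma_{tF}$ and the weak inequality $|t-t'|\le\sigma$ used to describe the constrained free space. This is handled by the same $\eta\to 0$ limiting argument used for the distance bound, since the set $\{(t,t'):|t-t'|\le\sigma\}$ is the closure of $\{(t,t'):|t-t'|<\sigma\}$ and monotone paths in a compact square form a compact family under uniform convergence. Apart from this, the proof is a direct transcription of the classical argument, so no new machinery beyond the original Alt--Godau reasoning is required.
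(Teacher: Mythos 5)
Your proof is correct and follows exactly the reasoning the paper relies on: the lemma is stated there without an explicit proof, as an immediate consequence of the standard Alt--Godau correspondence between monotone paths in the free space and pairs of re-parametrizations, restricted to the diagonal strip $|t-t'|\le\sigma$. Your write-up is in fact slightly more careful than the paper, since you explicitly address the infimum in the definition and the strict-versus-weak inequality mismatch via the limiting argument.
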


\smallskip\noindent\textbf{Varying Speed. }
If we allow a mobile entity travel at varying speed in the direction from a sample point to the next, the problem boils down to setting constraints only on how sample points of $A, B$ are paired. The set of sample locations of $B$, which $A(t_i)$ can be paired with, contains those that are within time interval $[t_i-\sigma, t_i+\sigma]$, and the first one outside this interval:
\begin{equation}\label{eq:varyingspeed}
\{ B(t_j') \big | |t_i -t_j'|\leq \sigma \text{ or } (t'_j<t_i-\sigma, t_{j+1}'\geq t_i-\sigma) \text{ or } (t_j'>t_i+\sigma, t_{j-1}'\leq t_i+\sigma) \}.
\end{equation}
Specifically, since we allow a mobile node travel with arbitrary speed, the first location sample of $B$ \emph{outside} the time window $\sigma$ from $t_i$ can also be paired with $A(t_i)$. Suppose we match $A(t_j)$ with $B(t_j')$ for $t'_j<t_i-\sigma, t_{j+1}'\geq t_i-\sigma$. This can happen if the mobile entity traveling on $B$ stays at the location $B(t_j')$ till time $t_i-\sigma$. 
Similarly,  $B(t_j')$ for 
$t_j'>t_i+\sigma, t_{j-1}'\leq t_i+\sigma$ can be paired with $A(t_i)$, if the mobile entity on $A$ stays at $A(t_i)$ till $t_j'-\sigma$. 
From the above analysis, each $A(t_i)$ can be paired with at least two locations of $B$. 
Such pairing is monotone in time, that is, if  $A(t_i)$ can be paired with  $B(t_j'),B(t_{j+1}'),\cdots, B(t_k') $ and  $A(t_{i+1})$ can be paired with  $B(t_{j'}'),B(t_{j'+1}'),\cdots, B(t_{k'}') $, we have $t_j'\leq t_{j'}'$, $t_k'\leq t_{k'}'$. 

Once a pairing of vertices of $A$ and $B$ satisfy the constraints above, there is a possible assignment of velocities to the interior points on a segment such that the pairing are within time window of $\sigma$ of each other.

In this setting, the map in a trajectory function from time to the line segment between two sample points is not linear, so we transform trajectory $A,B$ into their polygonal curve parametrization, $A':[0,n]\rightarrow\reals^2, B':[0,m]\rightarrow \reals^2.$ 
In the free space diagram, we make the horizontal axis represent domain of $A'$ and the vertical axis represent domain of $B'$. A point on the $x$-axis of the free space diagram represents a point on $A'$.
If both $A'(i)$ and $A'(i+1)$ can be paired with $B'(j)$ and $B'(j+1)$ by Condition~\ref{eq:varyingspeed}, there exists a way in which two mobile entities travel on line segment $A'(i)A'(i+1)$ and $B'(j)B'(j+1)$ during $[t_i, t_{i+1}]$ and $[t_j', t_{j+1}']$, respectively, s.t. 
for any 
pairing of the two mobile entities' locations on a re-parametrization in the cell $\mathcal{C}_{ij}=[i,i+1]\times[j,j+1]$ on free space diagram, the difference of time when the two mobile entities visiting the locations, is within time window $\sigma$. 
Then, we say that a cell $\mathcal{C}_{ij}$ is valid under the time window constraint. 
See Figure \ref{Fig:arbitrary} for an example. 
Hence, we have the following lemma. 

\begin{lemma}
Given time window $\sigma$, 
$\delta^\sigma_{tF}(A, B)\leq \varepsilon$  iff there is a monotone path from $(0,0)$ to $(n,m)$ within the union of all valid cells in $D_{\leq\varepsilon}(A, B)$. 
\end{lemma}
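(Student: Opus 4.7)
The plan is to prove the biconditional by exhibiting a witness for each direction in terms of a witness for the other.

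For the $(\Leftarrow)$ direction, I would start from a monotone path $P$ from $(0,0)$ to $(n,m)$ lying in the union of valid cells within $D_{\leq\varepsilon}(A,B)$, and build time reparametrizations $\alpha, \beta$ of $A, B$ witnessing $\delta^\sigma_{tF}(A,B) \leq \varepsilon$. Writing $P(s) = (u(s), v(s))$, the spatial reparametrizations $u, v$ of $A', B'$ automatically satisfy $d(A'(u(s)), B'(v(s))) \leq \varepsilon$ throughout, since $P \subseteq D_{\leq\varepsilon}(A,B)$. To secure the time-window constraint I invoke the valid-cell hypothesis cell by cell: on each valid $\mathcal{C}_{ij}$ traversed by $P$, the passage preceding the lemma guarantees velocity profiles on segments $a_i a_{i+1}$ and $b_j b_{j+1}$ over the respective time intervals $[t_i, t_{i+1}]$ and $[t_j', t_{j+1}']$ realizing \emph{every} monotone coupling in the cell with time gap at most $\sigma$. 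I would then concatenate these local profiles along the sequence of cells visited by $P$; consistency at cell transitions is automatic because each transition occurs at a shared sample vertex carrying a fixed sample time.

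For the $(\Rightarrow)$ direction, I would start with witnessing time reparametrizations $\alpha, \beta$ for $\delta^\sigma_{tF}(A,B) \leq \varepsilon$, convert them into arclength coordinates on the polygonal representations $A', B'$, and obtain a monotone path $P = (u, v)$ in the free space diagram with $P \subseteq D_{\leq\varepsilon}(A,B)$. It remains to show every cell $\mathcal{C}_{ij}$ entered by $P$ is valid, i.e., all four corner pairings satisfy Condition~(\ref{eq:varyingspeed}). While $P$ is inside $\mathcal{C}_{ij}$, the $A$-entity lies on $a_i a_{i+1}$ at time $\tau_A \in [t_i, t_{i+1}]$ and the $B$-entity on $b_j b_{j+1}$ at $\tau_B \in [t_j', t_{j+1}']$ with $|\tau_A - \tau_B| \leq \sigma$; this forces the intervals $[t_j', t_{j+1}']$ and $[t_i - \sigma, t_{i+1} + \sigma]$ to overlap. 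A routine case split on where $t_j'$ and $t_{j+1}'$ fall relative to the $\sigma$-windows around $t_i$ and $t_{i+1}$ then places each of the four corner pairings into case (a), (b), or (c) of Condition~(\ref{eq:varyingspeed}).

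The hard part will be cases (b) and (c) of Condition~(\ref{eq:varyingspeed}), which refer to samples \emph{outside} the cell $\mathcal{C}_{ij}$ (such as $b_{j-1}$ or $a_{i-1}$) and so cannot be verified by inspecting $\mathcal{C}_{ij}$ alone. My plan for these is to exploit monotonicity of $P$: before reaching $\mathcal{C}_{ij}$, $P$ must traverse adjacent cells in which the same interval-overlap argument yields the required bound on the out-of-cell sample. For instance, if $t_j' > t_i + \sigma$, then $P$ must earlier have visited a cell involving vertex $b_{j-1}$, whose overlap relation then forces $t_{j-1}' \leq t_i + \sigma$. Apart from this combinatorial bookkeeping, both directions follow directly from the free-space interpretation together with the definition of cell validity.
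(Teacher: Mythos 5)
The paper itself offers no proof of this lemma --- it is asserted as an immediate consequence of the definition of a valid cell --- so you are supplying an argument the authors leave implicit. Your sketch, however, has a genuine gap in the $(\Leftarrow)$ direction. Validity of a cell $\mathcal{C}_{ij}$ gives you, for that cell \emph{in isolation}, some pair of velocity profiles on $a_ia_{i+1}$ over $[t_i,t_{i+1}]$ and on $b_jb_{j+1}$ over $[t_j',t_{j+1}']$. But a velocity profile is attached to a segment, not to a cell, and a monotone path typically crosses several cells in one row, say $\mathcal{C}_{ij},\mathcal{C}_{(i+1)j},\dots$, all sharing the single segment $b_jb_{j+1}$; the profiles produced by the per-cell existence statements need not agree. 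Moreover, transitions between horizontally adjacent cells occur on a vertical edge of the diagram, where only the $A$-entity is at a vertex with a pinned sample time --- the $B$-entity is in mid-segment at a time dictated by whichever profile you already committed to, so you cannot restart the profile there. Consistency is therefore not ``automatic'': you must construct one global profile per segment (equivalently, build $\alpha$ and $\beta$ directly from the path by a single monotone sweep) and check the window constraint along the entire path rather than cell by cell.

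The $(\Rightarrow)$ direction also needs more than your case split. A visit to $\mathcal{C}_{ij}$ only gives you bounds anchored at the times actually realized inside the cell, e.g.\ $t_j'\le t_{i+1}+\sigma$, whereas the corner pairing $(a_i,b_j)$ in the validity condition demands statements anchored at $t_i$, such as $t_{j-1}'\le t_i+\sigma$. These do not follow from the visit: take $t_i=0$, $t_{i+1}=10$, $\sigma=1$, $t_{j-1}'=7$, $t_j'=8$, $t_{j+1}'=9$; the two entities can legitimately be co-located in time while on $a_ia_{i+1}$ and $b_jb_{j+1}$, yet $(a_i,b_j)$ fails all three clauses of Condition~(\ref{eq:varyingspeed}), so $\mathcal{C}_{ij}$ is not valid. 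Hence the path induced by a witnessing $(\alpha,\beta)$ may enter invalid cells, and what you actually need to show is that it can be \emph{rerouted} into the union of valid cells --- an argument your monotonicity bookkeeping over adjacent cells does not yet supply.
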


\smallskip\noindent\textbf{Algorithm and Running Time Analysis. } 
We use the free space diagram to find our time-window \frechet distance, 
where we enforce all monotone paths to stay within the region defined by the time window. 
Now, we focus on analysing the \frechet distance under the above two assumptions. 
We denote by $C(n,m,\sigma)$ the number of cells containing space satisfying time window constraint $\sigma$ in both above assumptions of constant and varying speeds.  
To find all such cells, since points on trajectories are already in chronological order, we can do a linear scan over trajectories in $O(n+m)+C(n,m,\sigma)$. 
Given $\varepsilon$, we can check whether $\delta^\sigma_{tF}(A,B)\leq \varepsilon$ in time $C(n, m, \sigma)$,  similar to \cite{buchin2010constrained}.  
With  Cole's parametric search~\cite{Cole1987Slowing} in  \cite{alt1995computing}, the complexity of finding the time window \frechet distance is $O(C(n,m,\sigma) \log C(n, m, \sigma))$. 
Since for each sample point on the trajectories, there is at least one corresponding cell containing the space satisfying the time window constraints, $C(n,m,\sigma)\geq n$. All above, the overall complexity is $O(C(n,m,\sigma)\log C(n,m,\sigma))$. 

\begin{remark}
We can also apply the time-window constraint on discrete \frechet distance and dynamic time warping, as shown in Appendix \ref{sec:othertimewindow}. 
\end{remark}

\section{String Representation and Metric-Based Edit Distance}

For  passively collected trajectories, 
when a mobile entity get connected to or approximated to a labeled checkpoint, such as cellular towers, WiFi Access Points, or other stations, the appearance of the mobile entity is collected. 
A trajectory is represented as a sequence of checkpoints visited in order. 
If we associate each checkpoint by a unique  character, we get a \emph{string representation} of the mobile entity's trajectory. 
Such trajectories contain rich information and save tremendous storage compared with trajectories of frequent GPS sample points.

\subsection{Edit Distance and Metric-Based Edit Distance}
With the string representation of a trajectory, a natural metric is the edit distance. The distance between two strings $A = a_1a_2\cdots a_n$ and $B = b_1b_2\cdots b_m$ is the smallest number of character insertions and deletions that convert $A$ to $B$. One can compute the edit distance by using dynamic programming in time $O(nm)$.

\def\ins{\operatorname{INS}}
\def\del{\operatorname{DEL}}
\def\cost{\operatorname{D}}
\def\back{\operatorname{B}}
\def\DP{\operatorname{D}}
\def\dist{d}
For trajectories, it would make sense to differentiate the insertion of a nearby location versus the insertion of a far away location. 
The definition of `nearby' or `far-away' location can be application dependent. For example, one can examine the Euclidean distance between two locations. Alternatively, it might be interesting to define such distance by the functionality (e.g., by the district partitioning in a city, by the type of buildings which a location belongs to, etc). We assume that this distance is defined by the metric $\dist(\cdot, \cdot)$. 

Given $\dist$, we propose a \emph{metric-based edit distance} with insertion and deletion cost as following.
Let $U$ be the set of all characters representing locations.
For $x,y,z \in U$,
we define the insertion cost to insert $z$ between $x$ and $y$, as the difference of taking the detour through $z$ rather than going straight:
\begin{equation}
\ins(x, y, z) = \dist(x,z) + \dist(y,z) -\dist(x,y). 
\end{equation}
Similarly, we define the deletion cost to delete $z$ between $x$ and $y$ symmetrically:
\begin{equation}
\del(x, y, z) =  \dist(x,z) + \dist(y,z) -\dist(x,y)
\end{equation}

To define the insertion and deletion cost before the first character and after the last character of a string, we add  character $S$ and $T$ to the beginning and end of each trajectory string. Here $S, T$ are dummy nodes with distance $M=\infty$ to all other locations.  
If $z$ stays on the line segment between $x, z$, $\ins(x,y,z)=0$, $\del(x,y,z)=0$. This makes sense as the shape of the trajectory does not change and the insertion/deletion of $y$ properly handles possible missing data in this case. 
Besides, the cost of inserting or deleting a sequnce of $y$ between $x$ and $z$ has the same cost with inserting or deleting one $y$ between $x$ and $z$.  In this case, redundant data is also handled. 

\begin{wrapfigure}{r}{0.43\textwidth}
\begin{center}
\vspace*{-14mm}
\begin{tabular}{cc}
	\includegraphics[scale=.4]{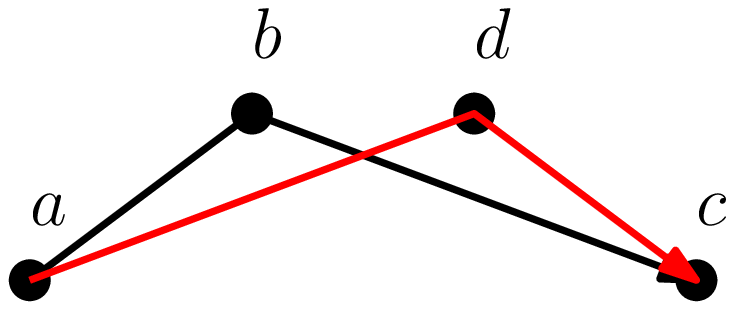} &
	\includegraphics[scale=.67]{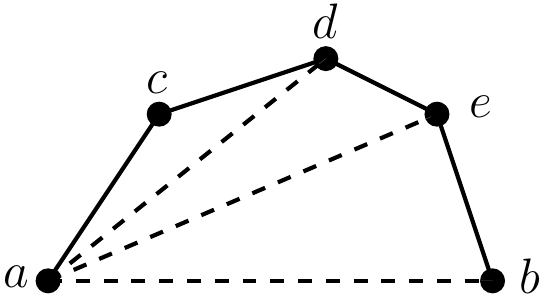} \\
\footnotesize (a) & \footnotesize (b) 
\end{tabular}
\end{center}
\vspace*{-4mm}
	\caption{\footnotesize \textbf{(a)}  The metric-based edit distance between trajectory $abc$ and $adc$ is $\min\{|ab|+2|bd|+|dc|-|ad|-|bc|, |ab|+|bc|+|ad|+|dc|-2|ac|\}$. The former is the cost of inserting $d$ first and then deleting $b$, while the latter is the cost of deleting $b$ first and then inserting $d$. 
  \textbf{(b)} The cost of deleting $cde$ between $a$ and $b$ is always $|ac|+|cd|+|de|+|eb|-|ab|$ for all the orders to delete $cde$.  }
	\label{fig:dist2}
\vspace{-6mm}
\end{wrapfigure}

\begin{definition}[Metric-based edit distance]
Given two trajectories with string representation, the \emph{metric-based edit distance} is the minimum cost to convert one trajectory string to the other with arbitrary order of insertions and deletions, where the cost of insertion and deletion is defined by function $\ins$ and $\del$. 
\end{definition}


In some edit distance definitions, one can substitute one character by another. Here we use insertion and deletion to simulate a substitution operation. 
An example of metric-based edit distance is shown in Figure \ref{fig:dist2}(a).


%

\begin{lemma}
\label{lem:insertcost}
Given two characters in a trajectory string, 
  the total cost of deleting all characters between these two characters is not affected by the order of performing these deletions. Given two neighboring characters in a trajectory string, the total cost of inserting a sequence of characters between these two characters is also not affected by the order. 
\end{lemma}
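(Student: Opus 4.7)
The plan is to prove both statements by a telescoping / potential argument based on the ``path length'' along the relevant segment of the string. For any word $w_0 w_1 \cdots w_{j+1}$, define
\[
\Phi(w_0, w_1, \ldots, w_{j+1}) \;=\; \sum_{i=0}^{j} \dist(w_i, w_{i+1}).
\]
The key observation is that both $\ins$ and $\del$ measure exactly how $\Phi$ changes under a single local operation: deleting $y$ from $\cdots u\,y\,v \cdots$ decreases the path length of the surrounding segment by $\dist(u,y) + \dist(y,v) - \dist(u,v) = \del(u,v,y)$, and symmetrically inserting $y$ between $u$ and $v$ increases it by $\ins(u,v,y)$. The edges outside the $[x,z]$ window are never touched by operations restricted to that window, so their contributions to $\Phi$ cancel out and can be ignored.

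For the deletion statement, fix two characters $x$ and $z$ that are present in the trajectory string throughout the process, with characters $y_1, \ldots, y_k$ strictly between them (in order). Any admissible ordering of deletions removes $y_1, \ldots, y_k$ one at a time, and at each step the character being deleted has some pair $(u,v)$ as its current left and right neighbors; both $u$ and $v$ lie in $\{x, z, y_1, \ldots, y_k\}$. By the observation above, each deletion reduces $\Phi$ of the $x$-to-$z$ segment by exactly its deletion cost. Summing telescopically,
\[
\text{total deletion cost} \;=\; \Phi_{\text{start}} - \Phi_{\text{end}} \;=\; \Bigl(\dist(x,y_1) + \sum_{i=1}^{k-1}\dist(y_i, y_{i+1}) + \dist(y_k, z)\Bigr) - \dist(x, z),
\]
which depends only on the initial and final configurations, not on the order.

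For the insertion statement, fix two \emph{neighboring} characters $x$ and $z$ into whose gap we insert $y_1, \ldots, y_k$ (in that final order). Regardless of the order in which the individual insertions occur, each $y_\ell$ is placed at its correct relative position among the already-inserted $y$'s, so when it is inserted its current neighbors $(u,v)$ are precisely its eventual left/right neighbors \emph{in the restriction of the final string to $\{x, z\} \cup \{\text{already inserted } y\text{'s}\}$}. Again $\Phi$ of the $x$-to-$z$ segment increases by exactly $\ins(u,v,y_\ell)$ at each step, and telescoping gives total insertion cost $= \Phi_{\text{end}} - \Phi_{\text{start}}$, independent of order.

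The main obstacle is really just the bookkeeping in the previous paragraph: one must check that in any admissible insertion order the ``current neighbors'' $(u,v)$ of the next character to be inserted are well-defined and recoverable from the final word, so that the telescoping is clean. Once that is nailed down, both parts of the lemma reduce to the single identity that $\ins$ and $\del$ are exactly the change in $\Phi$ under one local move, and no metric property beyond the definitions of $\ins$ and $\del$ is needed (in particular, the triangle inequality is not required for the statement to hold).
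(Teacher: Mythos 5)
Your proof is correct. The paper proves the lemma by induction on the number $\ell$ of characters being deleted, with a case analysis on which character is deleted \emph{last}: that character splits the remaining deletions into two independent sub-problems of size at most $\ell-1$, and in every case the total collapses to the closed form $\sum_{j}\dist(z_j,z_{j+1})-\dist(z_0,z_{k+2})$ --- exactly your $\Phi_{\text{start}}-\Phi_{\text{end}}$. Your potential-function argument reaches the same identity in a single step: once you observe that each deletion (resp.\ insertion) decreases (resp.\ increases) the path length $\Phi$ of the bracketed segment by exactly its cost, order-independence is immediate from telescoping, with no induction and no case analysis. This is a cleaner packaging of the same underlying computation, and it also makes transparent the fact you note at the end --- that only the definitions of $\ins$ and $\del$ are used, not any metric axiom. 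One small remark: the ``bookkeeping'' you flag as the main obstacle in the insertion case is actually not needed; the telescoping argument never requires you to identify the current neighbors $(u,v)$ explicitly, only that the cost charged at each step equals the change in $\Phi$ and that the initial and final strings (hence the initial and final values of $\Phi$) are fixed.
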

Proof see Appendix \ref{proof:insertcost}. An example is in Figure \ref{fig:dist2}(b). 
\begin{theorem}
\label{thm:metric}
The metric-based edit distance is a metric.
\end{theorem}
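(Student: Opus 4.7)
The plan is to verify the four axioms of a metric space in turn: non-negativity, symmetry, the triangle inequality, and identity of indiscernibles. Non-negativity is immediate from the triangle inequality of the underlying metric $\dist$: the cost $\ins(x,y,z) = \dist(x,z)+\dist(y,z)-\dist(x,y) \ge 0$, and $\del$ is defined by the identical formula; hence any sequence of edit operations has non-negative total cost, and so does the minimum.

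For symmetry, observe that $\ins$ and $\del$ coincide as three-argument functions, and that insertion and deletion are mutually inverse. Given any edit sequence $\pi$ converting $A$ to $B$, reverse the order of $\pi$ and swap each insertion of $z$ between $x,y$ with the deletion of $z$ between $x,y$ (and vice versa). The resulting sequence $\pi^{-1}$ converts $B$ to $A$ at exactly the same total cost, so $d(B,A) \le d(A,B)$, and the reverse inequality follows by the symmetric argument.

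The triangle inequality follows by concatenation. Let $\pi_1$ be an optimal sequence from $A$ to $B$ of cost $d(A,B)$, and $\pi_2$ an optimal sequence from $B$ to $C$ of cost $d(B,C)$. Performing $\pi_1$ followed by $\pi_2$ is a valid edit sequence converting $A$ to $C$, and its total cost is exactly $d(A,B)+d(B,C)$ since each operation is charged only against its current neighbors at the moment it is applied. Because $d(A,C)$ is the minimum over all such sequences, $d(A,C) \le d(A,B)+d(B,C)$.

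The delicate step, and the main obstacle, is identity. The direction $A = B \Rightarrow d(A,B)=0$ is trivial via the empty edit sequence. For the converse, suppose $d(A,B) = 0$. By the non-negativity of each operation cost, every operation in an optimal sequence must have cost exactly zero, which by the formula for $\ins$ and $\del$ forces each inserted or deleted character $z$ to satisfy $\dist(x,z)+\dist(y,z) = \dist(x,y)$, i.e., $z$ lies on a geodesic between its current neighbors $x$ and $y$. The dummy endpoint characters $S,T$, which have distance $M=\infty$ to every other character, rule out any zero-cost operation touching the string boundaries, so only interior zero-cost operations need to be excluded. Under the standard general-position assumption that no character lies on a geodesic between two other distinct characters of the underlying metric, such operations are impossible, so the optimal sequence must be empty and $A=B$; without this genericity assumption the construction is only a pseudometric in general, which is where I expect the argument to require the most care.
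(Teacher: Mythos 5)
Your proof is correct and follows essentially the same route as the paper: symmetry by reversing the operation sequence and swapping insertions with deletions (valid because $\ins$ and $\del$ are the same three-argument function), and the triangle inequality by concatenating optimal edit sequences through the intermediate string. Your treatment of the identity axiom is in fact more careful than the paper's, which simply asserts the distance is zero ``if and only if two trajectories have the same shape'' --- an implicit concession of exactly the pseudometric issue you identify with zero-cost insertion or deletion of a character lying on a geodesic between its neighbors.
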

Proof see Appendix \ref{proof:metric}.

\subsection{Algorithm for Metric-based Edit Distance}
The insertion and deletion cost of  metric-based edit distance is decided by the neighbors. This provides us a cost highly related to the underlying distance metric of two symbols. On the other hand, it also increases the complexity of computing the distance as we need to consider different orders of insertion and deletion of symbols as these may affect the cost in later insertion and deletion operations. 
The good thing is that 
we can still run dynamic programming algorithm with running time $O(n^3m^3(n+m))$ with details provided in Appendix \ref{sec:arbitrary}. 

We can simplify the computation if we require all insertions be done before any deletions when converting $A$ to $B$. In this way the running time can be made to be
in $O(nm)$. But the downside is that the distance no longer satisfies the triangle inequality.
See Appendix \ref{sec:insertFirst}.


\section{\texorpdfstring{$k$}-gather Clustering}

A common practice to process trajectories is to perform clustering. That is, trajectories that are similar to each other are grouped in one cluster. Tight and dense clusters of trajectories naturally correspond to meaningful features such as group motion, convoy, etc. A proper notion for this purpose is the $k$-gather problem, which requires each cluster to have at least $k$ trajectories. 

\begin{definition}[$k$-gather problem\cite{aggarwal2006achieving}]
The $k$-gather problem is to cluster $n$
points in a metric space into a set of clusters, such that each
cluster has one point as the center and at least $k$ points. The objective is to minimize the
maximum radius among the clusters, where the radius is the distance from a point in a cluster to the center of the cluster. 
\end{definition}


The $k$-gather problem on a general metric distance is NP-hard to compute when $k>6$~\cite{aggarwal2006achieving}. We show that for our specific metric distances between trajectories the problem remains hard. Our gadgets construction was motivated by the original proof \cite{aggarwal2006achieving} and need to be carefully created to fit the trajectory setting.





\begin{theorem}
	\label{thm:hardedit}
The $k$-gather problem of trajectories on edit distance and metric-based edit distance is NP-hard, for $k>13$.
\end{theorem}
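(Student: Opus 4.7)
The plan is a reduction from the metric-space $k$-gather problem (Aggarwal et al.), which is NP-hard already for $k>6$. I will start from the canonical hard instances in which input distances take only two values --- ``close'' ($\le r$) or ``far'' ($>r$) --- so the decision question reduces to: can the points be partitioned into clusters of size at least $k$, each contained in a ``close'' ball? My task is to realize such an instance as a trajectory instance whose pairwise edit distance (resp.\ metric-based edit distance) preserves the close/far dichotomy.

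For each point $p$ I build a trajectory $T_p$ over a fresh alphabet. The skeleton of $T_p$ is a long shared scaffold string $W$, identical across all points, chosen long enough that any optimal edit alignment of $T_p$ and $T_q$ is forced to synchronize on $W$. Inserted into $W$ at prescribed anchor positions is a short identity block $I_p$ encoding which ``close'' pairs $p$ belongs to: for every $q$ with $d(p,q)\le r$ I place a shared token $\tau_{pq}=\tau_{qp}$ in $I_p$, and for every ``far'' $q$ I place a private token $\pi_{pq}$ unique to $p$. The underlying character metric for the metric-based variant is chosen so that, by the detour formula defining $\ins$ and $\del$, shared tokens at anchor positions cost essentially nothing to insert or delete (they lie on the line between their scaffold neighbors), while private tokens incur a uniform positive cost. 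Tuning the number of tokens and their placement produces a constant gap between trajectory distances coming from close pairs and those coming from far pairs.

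The equivalence of instances is then nearly immediate: a cluster of trajectories of radius below the chosen threshold must come from a set of points that were pairwise close in the source instance, and conversely. Hence the source $k$-gather instance is yes iff the trajectory $k$-gather instance of matching radius is yes. To raise the required constant from $k>6$ past $k>13$, I would replace each original point by a bundle of identical twin (or near-twin) trajectories --- say, a group of size two or three whose intra-group distance is zero --- so that any valid cluster must contain a proportional multiple of original entities. Straightforward arithmetic on the bundle size pushes the threshold past $13$ while leaving correctness intact.

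The main obstacle will be proving the lower bound on edit cost between trajectories corresponding to far pairs, especially under the metric-based edit distance: because insertion and deletion costs depend on the surviving neighbors, one must rule out rewrites that temporarily re-thread the scaffold to cheapen the removal of private tokens. Lemma~\ref{lem:insertcost} is the central tool here, since it reduces the cost of any contiguous block of insertions or deletions to an order-free aggregate over the affected characters, allowing me to argue that the ``match-the-scaffold, rewrite-the-identity-block'' strategy is optimal. Theorem~\ref{thm:metric} then guarantees that the resulting distance is genuinely a metric, so the constructed instance is a bona fide $k$-gather input and the reduction is polynomial-time, completing the proof for both edit distance and metric-based edit distance.
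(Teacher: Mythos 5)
Your route is genuinely different from the paper's (which reduces directly from 3SAT with a planar ``detour'' gadget and derives $k>13$ by explicitly counting how many trajectories lie within radius $5$ of each candidate center), but as written it has a concrete flaw in the metric-based case. You arrange for shared tokens $\tau_{pq}$ to ``cost essentially nothing to insert or delete'' because they lie on the segment between their scaffold neighbors, while private tokens cost $c>0$. But then matching a shared token saves nothing over deleting it, and the cost of transforming $T_p$ into $T_q$ becomes: delete every token of $I_p$ not matched, insert every token of $I_q$ not matched, where each term contributes $0$ if the token is shared and $c$ if it is private. Summing, the distance equals $c\cdot(\mathrm{fardeg}(p)+\mathrm{fardeg}(q))$ \emph{whether or not $p$ and $q$ are mutually close} --- the pair's own token contributes $0$ when shared and $2c$ when private, but every other token's contribution is fixed by $p$'s and $q$'s far-degrees alone. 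The close/far dichotomy is washed out, so the reduction does not preserve the instance under metric-based edit distance. The fix is the opposite of what you propose: all tokens must have strictly positive detour cost, so that the saving comes from \emph{matching} the shared token rather than from deleting it cheaply; you then still owe the delicate lower-bound argument that no edit script re-threads the scaffold to match more than it should, for which Lemma~\ref{lem:insertcost} alone is not obviously sufficient.

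Two further gaps are smaller but real. First, the amplification from $k>6$ to $k>13$ by duplicating each point into a bundle of twins is not ``straightforward arithmetic'': in the backward direction a valid clustering of the bundled instance may split the twins of a point across clusters, and projecting back to the original points need not yield a partition into dominated sets of size at least $7$. The paper avoids this by building the threshold into the gadget itself ($k-3$ supplement trajectories per variable) and bounding the neighborhood size of every trajectory type. Second, your starting point --- that $k$-gather is NP-hard already on two-valued (close/far) metrics --- is true via the threshold graph of the decision problem, but it needs to be stated and justified rather than assumed. For plain edit distance your skeleton (scaffold plus pair tokens, distance $2(n-1)-2$ for close pairs versus $2(n-1)$ for far pairs) is workable, so the proposal is salvageable there; the metric-based variant needs the cost assignment reversed and a genuine optimality argument before it goes through.
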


The reduction is from 3SAT. We create a set of trajectories such that the radius of the edit distance (or metric-based edit distance) is $5$ if and only if the 3SAT instance is satisfied.
We define a planar graph with each face representing a character in the string representation of trajectories. The faces are clustered into $4n$ cells $c_1, c_2,\cdots, c_{4n}$. The first $3n$ cells each consists of $3$ faces each: $2$ rectangle faces side by side and $1$ triangle face on top of them, while for the last $n$ cells each consists of $2$ rectangle faces side by side and $2$ triangle faces adjacent to both rectangle faces as shown in Figure \ref{fig:NPproof1}. 
\begin{figure*}[htpb]
	\centering
	\includegraphics[width=\textwidth]{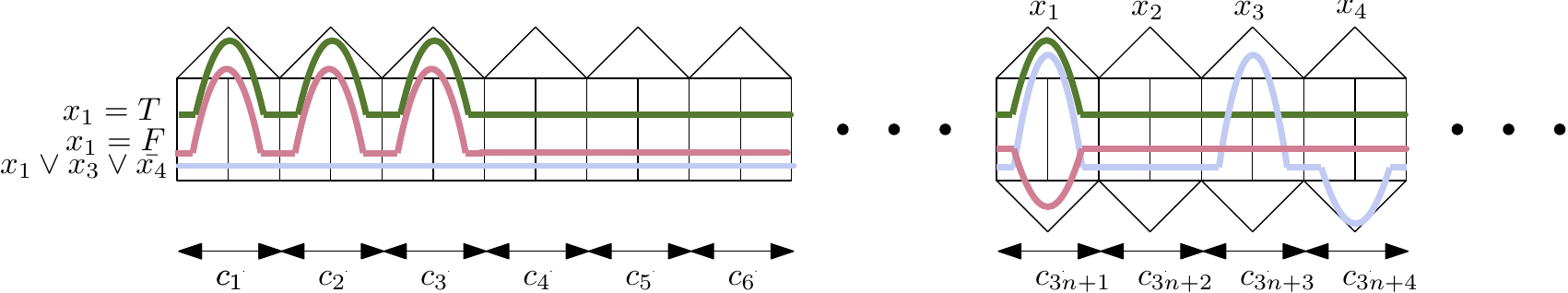}
	\caption{The variable gadget to reduce from $3$SAT problem.}
	\label{fig:NPproof1}
\end{figure*}
	
All trajectories go from $c_1$ to $c_{4n}$ through all the rectangular faces. But some trajectories take `detours' to visit some of triangular faces. We call that a top detour (visiting the face above the rectangular faces) or a bottom detour (visiting the face below the rectangular faces). For the part of trajectories within each cell, the edit distance between a trajectory with a detour and a trajectory going straight is $1$, and the edit distance between a trajectory with a top detour and  one with a bottom detour is $2$.
	
For each variable $x_i$, we construct two trajectories for variable assignment $x_i=\mbox{True}$ and $x_i=\mbox{False}$ respectively. Both trajectories have top detours in $c_{3i-2}, c_{3i-1}, c_{3i}$; the trajectory of $x_i=\mbox{True}$ has a top detour in cell $c_{ 3n+i}$ while the trajectory of $x_i=\mbox{False}$ has a bottom detour in that cell; and there are no detours for all the other cells in the two trajectories. 
	For each clause, we create $3$ trajectories such that for each variable $x_i$ in the clause, each of the three trajectories has a top detour in cell $c_{3n+i}$ , else if $\overbar{x_i}$ is in the clause, each of the three trajectories has a bottom detour in cell $c_{3n+i}$; the three trajectories have no detours in the rest cells. 

If the 3SAT instance is satisfiable, there is a solution for $k$-gather by picking the trajectory representing the assignment as the center of a cluster. By playing with the edit distance/metric-based edit distance and additional filler trajectories (to make up $k$ trajectories in a cluster), we get the results. 
For the full proof see Appendix \ref{sec:hardedit}.

We also prove the hardness of $k$-gather on Jaccard distance in Appendix \ref{sec:hardnessJacc}. 
For completeness, we provide a $2$-approximation algorithm for the $k$-gather problem on these metrics based on the work~\cite{aggarwal2006achieving} in Appendix \ref{sec:approxAlgo}.

\section{Subquadratic Hardness for the Discrete \frechet Distance}
\begin{wrapfigure}{r}{0.4\textwidth}
\centering
\vspace*{0mm}
\includegraphics[width=0.4\textwidth]{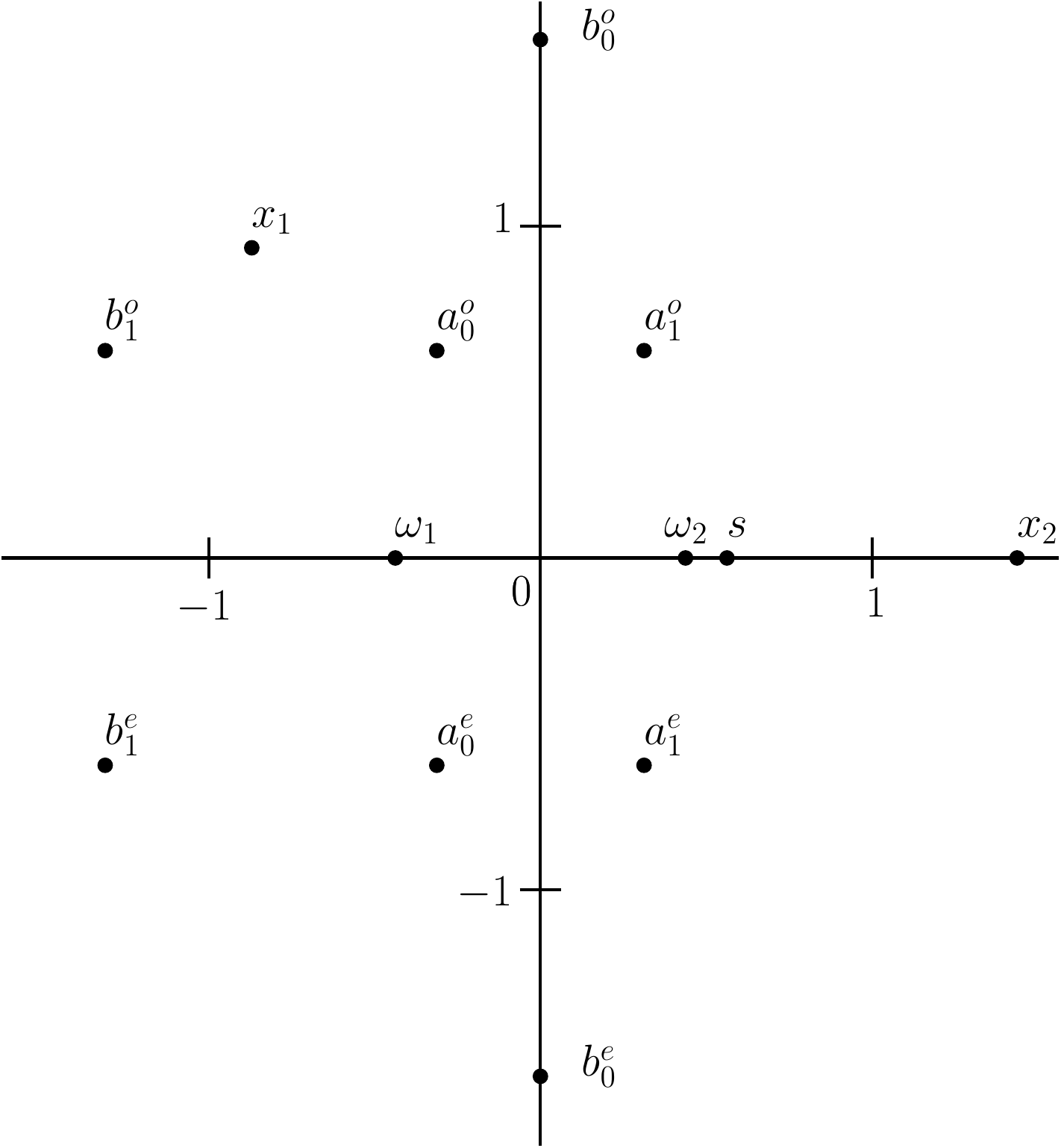}
\vspace*{-6mm}
\caption{The points for $A_i,B_j$. }
\label{Fig:approxHardness}
\vspace*{-15mm}
\end{wrapfigure}

The discrete \frechet distance can be computed with dynamic programming in $O(n^2)$~\cite{eiter1994computing}. 
Bringmann in~\cite{bringmann2014walking} proved that strongly subquadratic algorithms for the discrete \frechet distance are unlikely
to exist, unless the Strong Exponential Time
Hypothesis (SETH) fails, even in the one-dimensional case. Bringmann and Mulzer proved that it is true even if we allow an approximation up
to a factor of $1.399$~\cite{bringmann2015approximability}. We examined the construction of \cite{bringmann2015approximability} and we show in $\reals^2$
subquadratic algorithms are unlikely to exist for approximation factors up to $1.61$ for $\ell_2$ norm. 
Similar to prior work the reduction is from the Orthogonal Vector (OV) problem.


\begin{definition}[Orthogonal Vectors (OV) Problem]
Assume we have two collections of $D$-dimensional binary vectors $u_1,\dots, u_N$ and $v_1,\dots, v_N$, where $u_i, v_j \in \{0, 1\}^D$. We want to decide whether there is a pair of vectors $u_i$, $v_j$, such that $u_i\perp v_j$.
\end{definition}

\begin{lemma}[\cite{williams2004new}]
If there exists an $\eps>0$ such that there is an algorithm to solve the Orthogonal Vectors problem OV with running time $D\cdot N^{2-\eps}$, then SETH fails. 
\end{lemma}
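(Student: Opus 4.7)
The plan is to prove the contrapositive using the classical split-and-list reduction of Williams: given an algorithm $\mathcal{A}$ that solves OV in time $D \cdot N^{2-\eps}$, I will derive an algorithm for $k$-SAT that runs in time $O(2^{(1-\delta)n})$ for some fixed $\delta > 0$ independent of $k$, thereby contradicting SETH.

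First, I would take a $k$-CNF instance $\varphi$ with $n$ variables and $m$ clauses $C_1,\dots,C_m$, and split the variable set into two halves $X, Y$ of size $n/2$ each. Enumerating partial assignments to each half gives $N := 2^{n/2}$ candidate assignments on each side, which serve as the indices into the two OV lists.

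Second, for each partial assignment $\alpha$ on $X$ I would construct a clause-indicator vector $u_\alpha \in \{0,1\}^m$ by setting $u_\alpha[j] = 1$ iff no literal of $C_j$ restricted to $X$ is satisfied by $\alpha$. Define $v_\beta \in \{0,1\}^m$ symmetrically for partial assignments $\beta$ on $Y$. Building both lists takes $O(m \cdot 2^{n/2})$ time. The crucial observation is that a combined assignment $(\alpha,\beta)$ satisfies every clause of $\varphi$ if and only if, at every coordinate $j$, at least one of $u_\alpha[j]$ and $v_\beta[j]$ equals $0$, i.e., $u_\alpha \perp v_\beta$. Hence $\varphi$ is satisfiable iff the two constructed lists contain an orthogonal pair.

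Third, I would invoke $\mathcal{A}$ on the two lists of $N = 2^{n/2}$ vectors in dimension $D = m$. By assumption, the call takes time $D \cdot N^{2-\eps} = m \cdot 2^{(1-\eps/2)n}$. Since $m = n^{O(1)}$, combining with the $O(m \cdot 2^{n/2})$ list-construction cost gives total running time $O(2^{(1-\delta)n})$ for any $\delta < \eps/2$ and all sufficiently large $n$. This bound is uniform in $k$, so SETH is refuted.

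I do not anticipate any real obstacle beyond routine bookkeeping. The two things that need checking are (i) that the clause-indicator encoding correctly composes partial satisfactions into full satisfaction through orthogonality, and (ii) that the polynomial overhead $m = n^{O(1)}$ is absorbed into the exponential savings $N^{-\eps} = 2^{-\eps n/2}$. The conceptual content is simply the split-and-list idea: CNF satisfiability factors through an orthogonality condition on clause-coverage vectors over the two halves of the variable set.
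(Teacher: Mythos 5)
Your proposal is correct and is exactly the standard split-and-list reduction of Williams that the paper relies on by citation (the paper does not reproduce a proof of this lemma, attributing it to \cite{williams2004new}). Both the orthogonality encoding of clause coverage and the accounting showing that the polynomial factor $m=n^{O(1)}$ is absorbed into the savings $N^{-\eps}=2^{-\eps n/2}$ are handled correctly, so there is nothing to add.
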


Let $u_1,\dots, u_N$ and $v_1,\dots, v_N$ be the sets of vectors in the Orthogonal Vectors problem. Without loss of generality, we assume the dimension these vectors, $D$, is even -- otherwise we can pad an extra dimension with all values being $0$. We show how to construct two sequences $P$ and $Q$ in time $D\cdot N$ such that there are two orthogonal vectors $u_i$, $v_j$, $u_i \perp v_j$ if and only if $\delta_{dF}(P, Q) \leq 1$.


The sequences $P, Q$ consist of points as shown in Figure \ref{Fig:approxHardness}, where 
\begin{equation*}
\begin{split}
b_0^o=(0, 1.61),   &\, b_0^e=(0, -1.61), \\
b_1^o=(-1.305,0.66), & \, b_1^e=(-1.305,-0.66), \\ a_0^o=(-0.305,0.66), & \, a_0^e=(-0.305,-0.66),\\ a_1^o=(0.305,0.66), & \, a_1^e=(0.305,-0.66), \\s=(0.555,0), & \\
\omega_1=(-0.445,0),& \, \omega_2=(0.445,0), \\
x_1=(-0.88,0.90), & \, x_2=(1.445,0). 
\end{split}
\end{equation*}

%

Other than $x_1, x_2, s$, the other points labeled with $o$ and $e$ are symmetric with respect to the $x$-axis.

We first construct the vector gadgets. For  
$u_i=[u_{i,1},u_{i,2},\cdots, u_{i,D}]$, 
we construct a sequence of $D$ points, denoted by 
$A_i=[A_{i,1}, A_{i,2},\cdots, A_{i,D}]$.  
For the $k$th element $u_{i,k}$, the corresponding
point  $A_{i,k}$ is $a_{u_{i,k}}^{p}$ where $p$ is the parity of $k$, $p=o$ if $k$ is odd and $p=e$ if $k$ is even. In the same way,  for each $v_j$, we construct a corresponding sequence $B_j$  with points $b_{\{0,1\}}^{\{o,e\}}$. 

\begin{definition}[Parallel coupling]
A coupling $\beta$ of $A_i, B_j$ is 
a parallel coupling, if $\beta =[(A_{i,1}, B_{j,1}), (A_{i,2}, B_{j,2}),\cdots, (A_{i,k}, B_{j,k}), \cdots, (A_{i,D}, B_{j,D})] $.  
\end{definition}


The proofs of the following Lemmas can be found in the Appendix \ref{proof:parralel}.
\begin{lemma}
\label{lem:parralel}
Given a coupling $\beta$ of $A_i, B_j$, $\delta(\beta)=\max_{(A_{i,k},B_{j,k'})\in\beta} d( A_{i,k}, B_{j,k'})$, where $d(\cdot, \cdot)$ is the Euclidean distance between two points. 
If $\beta$ is not a parallel coupling, then $\delta(\beta)> 1.61$.    
If $\beta$ is a parallel coupling, for $u_i \perp v_j$, $\delta(\beta)\leq 1$, and for $u_i \not\perp v_j$, $\delta(\beta)\geq 1.61$.
\end{lemma}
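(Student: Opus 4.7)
\medskip

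\noindent\textbf{Proof proposal for Lemma~\ref{lem:parralel}.}

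My plan is to reduce everything to a finite distance table between the eight points $a_u^p, b_v^p$ for $u,v\in\{0,1\}$ and $p\in\{o,e\}$, and then show (i) that within a parallel coupling the only way to get distance $1.61$ is the single bad configuration $u_{i,k}=v_{j,k}=1$, and (ii) that any non-parallel coupling is forced to pair two points of opposite parity, which is always worse than $1.61$.

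First I would tabulate the relevant Euclidean distances. A short calculation using the coordinates in Figure~\ref{Fig:approxHardness} gives, for the same-parity case,
\[
d(a_0^p,b_0^p)=d(a_1^p,b_0^p)=\sqrt{0.305^2+0.95^2}=\sqrt{0.9955}<1,
\qquad d(a_0^p,b_1^p)=1,
\qquad d(a_1^p,b_1^p)=1.61,
\]
so $d(a_{u}^p,b_{v}^p)\le 1$ whenever $u\cdot v=0$, and $d(a_{u}^p,b_{v}^p)=1.61$ exactly when $u=v=1$. For the opposite-parity case one similarly checks
\[
d(a_0^o,b_0^e)=\sqrt{0.305^2+2.27^2},\quad d(a_1^o,b_1^e)=\sqrt{0.305^2+1.32^2},\quad d(a_{u}^o,b_{v}^e)\ge d(a_1^o,b_1^e),
\]
and the smallest of these four distances, namely $d(a_1^o,b_1^e)=\sqrt{0.305^2+1.32^2}\approx 1.355\cdot$\,\dots\,. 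I would verify numerically that this minimum exceeds $1.61$; if it does not, the claim actually needs the slightly larger opposite-parity pair ($a$-point at $y=0.66$ vs.\ $b$-point at $y=-1.61$), which certainly exceeds $1.61$. The key point is to nail down that \emph{every} opposite-parity pair of $a$-point and $b$-point has distance strictly greater than $1.61$, which is the one finite numerical check that makes the whole lemma go through and is the main obstacle I anticipate: it is easy to mis-pair these constants.

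Given the table, part (ii) of the lemma is immediate: in a parallel coupling, $\delta(\beta)=\max_k d(A_{i,k},B_{j,k})$, and by the same-parity table this maximum is at most $1$ when $u_i\perp v_j$ (no coordinate has $u_{i,k}=v_{j,k}=1$) and equals $1.61$ as soon as some coordinate $k$ witnesses $u_{i,k}=v_{j,k}=1$, which is exactly what $u_i\not\perp v_j$ means.

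For part (i), I would use the fact that a discrete Fr\'echet coupling of two equal-length sequences is determined by a monotone lattice path from $(1,1)$ to $(D,D)$ whose step set is $\{(0,1),(1,0),(1,1)\}$. If $\beta$ is not the pure diagonal, consider the first index $t$ at which the path leaves the diagonal. Then $c_t=(A_{i,k},B_{j,k})$ and $c_{t+1}$ is either $(A_{i,k},B_{j,k+1})$ or $(A_{i,k+1},B_{j,k})$, so $c_{t+1}$ is a pair whose two indices differ by exactly one and hence have opposite parities. By the opposite-parity bound above, $d(c_{t+1})>1.61$, and therefore $\delta(\beta)>1.61$. This takes care of the entire non-parallel case without having to enumerate coupling shapes. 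Finally, I would state the whole argument as two short claims (the distance tabulation and the first-deviation argument) and close.
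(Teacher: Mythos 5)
Your proposal follows the same route as the paper's proof: a finite distance table over the eight gadget points, the observation that a parallel coupling with $u_i\perp v_j$ never pairs $a_1^p$ with $b_1^p$ while a non-orthogonal pair forces the $(a_1^p,b_1^p)$ distance of $1.61$, and the reduction of the non-parallel case to an opposite-parity pair. Your first-deviation argument (the monotone lattice path must pass through a pair whose indices differ by exactly one before it can leave the diagonal) is in fact slightly more explicit than the paper, which simply asserts that a non-parallel coupling contains an odd--even pair. The same-parity entries of your table are correct.

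The one place your draft goes wrong is exactly the check you flagged as the main obstacle: the cross-parity distances. You write $d(a_1^o,b_1^e)=\sqrt{0.305^2+1.32^2}\approx 1.355$, but $a_1^o=(0.305,0.66)$ and $b_1^e=(-1.305,-0.66)$, so the horizontal offset is $1.61$, not $0.305$, giving $d(a_1^o,b_1^e)=\sqrt{1.61^2+1.32^2}\approx 2.08$. Your claim that $(a_1^o,b_1^e)$ minimizes the cross-parity distances is also false: the four values are $d(a_0^o,b_0^e)=d(a_1^o,b_0^e)=\sqrt{0.305^2+2.27^2}\approx 2.29$, $d(a_1^o,b_1^e)\approx 2.08$, and the minimum $d(a_0^o,b_1^e)=\sqrt{1^2+1.32^2}\approx 1.656$, which is the quantity the paper reports as $1.65$. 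Since all four exceed $1.61$, the check does go through and your argument is repaired by substituting the correct table; as written, however, the stated value $1.355<1.61$ would sink the non-parallel case, so this is a genuine (if purely arithmetic) error rather than a stylistic one.
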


Let $W$ be a sequence of $D(N-1)$ points that alternates between $a_0^o$ and $a_0^e$. We set 
$$P=W\circ x_1 \circ \left( \bigcirc_{i=1}^N s \circ A_i \right) \circ s \circ x_2 \circ W, \; \text{ and } Q= \bigcirc_{j=1}^N \omega_1 \circ B_j \circ \omega_2, $$
 where $\circ$ denotes the concatenation of the sequences.

The subsequence $W$ is a buffer to make the \frechet distance  between $W$  and at most $N-1$ sequences $\omega_1 \circ B_j \circ \omega_2$ be within $1$. 
$\omega_1$ and $\omega_2$ on $P$ are points to make the \frechet distance between $\omega_1$, $\omega_2$ and sequences $s \circ A_i$ be within $1$. 
$s$ is used to synchronize in the coupling such that if the \frechet distance is at most $1$, the coupling between a pair of $A_i$ and $B_j$ must be parallel. 
$x_1, x_2$ guarantee at least one parallel coupling between a sequence for $u_i$ and a sequence $v_j$ exists if the distance of the coupling is at most $1$. 

Below we will prove a number of Lemmas for the main result with proofs in Appendix \ref{proof:dF1}, \ref{proof:perp}. First, we show that  if there exists a pair of orthogonal vectors, we  guarantee that the \frechet distance between the corresponding sequences of $P$ and $Q$ is no greater than $1$; also, if there are no orthogonal vectors, the distance between corresponding $P$ and $Q$ is greater than $1.61$.

\begin{lemma}
\label{lem:dF1}
If there exists a pair $u_i,v_j$ such that $u_i \perp v_j$, $\delta_{dF}(P,Q)\leq 1$. 
\end{lemma}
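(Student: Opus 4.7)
My plan is to exhibit an explicit monotone traversal $\beta$ of $P$ and $Q$ with width at most $1$, which by the definition of discrete \frechet distance gives $\delta_{dF}(P,Q)\leq 1$.

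First, I will catalog the unit-distance adjacencies the coordinates are engineered to produce: one checks $d(a_0^{p},\omega_1), d(a_0^{p},\omega_2), d(a_0^{p}, b_0^{p}), d(a_0^{p}, b_1^{p})\leq 1$ for each parity $p\in\{o,e\}$; $d(s,\omega_1)=1$, $d(s,\omega_2)<1$, $d(x_1,\omega_1)\leq 1$, $d(x_2,\omega_2)=1$; and every $A_{i,k}$ is within distance $1$ of both $\omega_1$ and $\omega_2$. Together with Lemma~\ref{lem:parralel}, which bounds by $1$ the width of the parallel coupling of $A_i$ with $B_j$ whenever $u_i \perp v_j$, this is all the numerical input the proof requires.

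Second, given an orthogonal pair $u_i\perp v_j$, I will build $\beta$ in five stages. Write $Q_k = \omega_1\circ B_k\circ \omega_2$ for the $k$th block of $Q$. \textbf{(Leading buffer.)} Couple the leading $W$ of $P$ with $Q_1\circ\cdots\circ Q_{j-1}$, using the alternating parity of $W$ to match each $B_{k,\ell}$ with a same-parity $a_0^{p}$ and absorbing the bookend $\omega_1,\omega_2$ on an adjacent $W$-vertex; any surplus of $W$ is stalled on the very first $\omega_1$. \textbf{(Transition.)} Pair $x_1$ with the $\omega_1$ of $Q_j$. \textbf{(Pre-orthogonal flush.)} Hold the $Q$-cursor on this $\omega_1$ and walk the $P$-cursor through $s\circ A_1\circ s\circ\cdots\circ s\circ A_{i-1}\circ s$; every such vertex is within distance $1$ of $\omega_1$ by the catalog. \textbf{(Parallel coupling.)} Walk $A_i$ in parallel with $B_j$; by Lemma~\ref{lem:parralel} each pair is within distance $1$. \textbf{(Suffix.)} Hold the $Q$-cursor at the $\omega_2$ of $Q_j$ and walk $P$ through $s\circ A_{i+1}\circ s\circ\cdots\circ s\circ A_N\circ s$, then pair $x_2$ with this $\omega_2$, and finally couple the trailing $W$ of $P$ with $Q_{j+1}\circ\cdots\circ Q_N$ by mirroring the leading-buffer argument.

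Third, I will verify monotonicity of $\beta$, which is immediate from the staged design, and check that every paired vertex has distance at most $1$ by direct lookup from the first step. The main obstacle is the bookkeeping of Stage $1$ (and symmetrically Stage $5$): $W$ has only $D(N-1)$ points while the prefix $Q_1\circ\cdots\circ Q_{j-1}$ can contain up to $(N-1)(D+2)$ points, so the buffer cannot proceed one-to-one. The resolution is that each block contributes only $D$ parity-constrained vertices (the $B_{k,\ell}$), while its two bookend vertices $\omega_1,\omega_2$ lie within distance $1$ of both $a_0^{o}$ and $a_0^{e}$, so they can be absorbed by stalling the $W$-cursor between parity switches without violating monotonicity or the width bound.
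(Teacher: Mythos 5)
Your proposal is correct and follows essentially the same route as the paper's proof: both exhibit the same staged monotone traversal (stall the surplus of the leading $W$ at the first $\omega_1$, parallel-couple $W$ with the blocks before $B_j$ while absorbing the bookends $\omega_1,\omega_2$ on a fixed $a_0$-vertex, flush the $A$-gadgets before $A_i$ against $\omega_1$ of $Q_j$ after pairing $x_1$ with it, run the parallel coupling of $A_i$ with $B_j$ via Lemma~\ref{lem:parralel}, flush the remaining $A$-gadgets and $x_2$ against $\omega_2$, and mirror the buffer argument for the tail). The distance checks you catalog are exactly the ones the paper verifies, so nothing further is needed.
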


\begin{lemma}
\label{lem:perp}
If there is no $i,j$, such that $u_i \perp v_j$, $\delta_{dF}(P, Q)\geq 1.61$.
\end{lemma}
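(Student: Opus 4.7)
My plan is to argue by contradiction: suppose $\delta_{dF}(P,Q) < 1.61$ while no pair $u_i \perp v_j$ exists, and fix an optimal coupling $\beta$, so every pair in $\beta$ has Euclidean distance strictly less than $1.61$. The goal is to exhibit indices $i^*, j^*$ such that the restriction of $\beta$ to pairs between $A_{i^*}$-vertices and $B_{j^*}$-vertices is a full parallel coupling of $A_{i^*}$ and $B_{j^*}$; applying Lemma~\ref{lem:parralel} with $u_{i^*} \not\perp v_{j^*}$ then forces $\delta(\beta) \ge 1.61$, contradicting the assumption.

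The first step is a careful tabulation of pairwise Euclidean distances among the gadget vertices, from which I extract several admissibility facts: $x_2$ can only be paired with an $\omega_2$-vertex; every occurrence of $s$ can only be paired with $\omega_1$ or $\omega_2$; each buffer vertex $a_0^o$ or $a_0^e$ in $W$ admits only $\omega_1, \omega_2$ or a parity-matching $b$-point; and the borderline pair $(a_1^p, b_1^p)$ lies at distance exactly $1.61$ and so cannot occur under strict inequality. Using these admissibility facts together with the monotonicity of $\beta$, I then trace the coupling blockwise. The $Q$-partners of $x_1, s^{(0)}, s^{(1)}, \ldots, s^{(N)}, x_2$ form a monotone walk along the ordered list $\omega_1^{(1)}, \omega_2^{(1)}, \ldots, \omega_1^{(N)}, \omega_2^{(N)}$, and between consecutive $s$-partners the $D$ vertices of $A_i$ must cover the intermediate $Q$-range. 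I will enumerate the feasible transitions: a transition that stays at the same $\omega$ or jumps $\omega_2^{(j)} \to \omega_1^{(j+1)}$ lets $A_i$ collapse onto one or two $\omega$-vertices and pairs no $A$-vertex with any $B$-vertex, while any other transition forces the $D$ vertices of $A_i$ to cover all of some $B_j$, and parity alternation ($a^o$ only pairs with $b^o$, $a^e$ only with $b^e$) then forces the induced sub-coupling to be parallel, which by Lemma~\ref{lem:parralel} together with $u_i \not\perp v_j$ already yields the contradiction.

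Finally, a counting argument shows we cannot always remain in the ``collapsing'' case. Each buffer $W$ has length $D(N-1)$, and parity-matching dedicates a distinct $W$-vertex of matching parity to every $b$-point it absorbs, so each buffer absorbs at most $N-1$ full $B$-blocks. Together with the pinning of $x_1, x_2$ onto admissible $\omega$-vertices this leaves at least one block $B_{j^*}$ that must be spanned by some single $A_{i^*}$-block in $\beta$; by the parity argument, the corresponding sub-coupling is forced to be parallel, supplying the $(i^*, j^*)$ needed for Lemma~\ref{lem:parralel}. I expect the main obstacle to be the transition case analysis itself: verifying that every transition other than a ``stay'' or a single $\omega_2^{(j)} \to \omega_1^{(j+1)}$ jump either forces a parallel coupling of $A_i$ with some $B_j$ or is infeasible because the $D$ vertices of $A_i$ cannot parity-cover two full $B$-blocks simultaneously. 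Once this transition structure is in hand, the distance tabulations and buffer capacity bounds are routine.
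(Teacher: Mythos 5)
Your overall strategy is the same as the paper's: pin the partners of the $s$'s and of $x_2$ onto $\omega$-vertices, use parity to show that any full $B$-block absorbed by a single $A_i$ must be absorbed by the parallel coupling, and then invoke Lemma~\ref{lem:parralel} on a non-orthogonal pair to reach $1.61$. Your blockwise/transition packaging is a cleaner organization of the paper's sequential walk-through, and the observation that each $A$-vertex (or $W$-vertex) can absorb at most one $b$-vertex because consecutive $Q$-vertices inside a block alternate parity is exactly the right mechanism. However, there are two concrete problems. First, $x_1$ is \emph{not} pinned to the $\omega$-list by the distance table: $d(x_1,b_1^o)\approx 0.49$ and $d(x_1,b_0^o)\approx 1.13$, both below $1.61$, so your claim that ``the $Q$-partners of $x_1, s^{(0)},\dots$ form a monotone walk along $\omega_1^{(1)},\omega_2^{(1)},\dots$'' does not follow from the admissibility facts you list (none of which concerns $x_1$). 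You need a look-ahead step: if the run of $Q$-partners of $x_1$ ends on a $b^o$-vertex (which is never the last vertex of a block, since $D$ is even), then the next pair of the coupling is either $(s^{(0)},\,b\text{-vertex})$ or $(x_1,\,b^e\text{-vertex})$, both at distance at least $1.61$; hence the run must end on $\omega_1^{(j)}$ for some $j$. This is exactly the case the paper's proof spends its first paragraph on, and without it the left endpoint of your window is not established.

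Second, the buffer-capacity count does no work for this direction: two buffers absorbing at most $N-1$ blocks each gives $2(N-1)\geq N$ for $N\geq 2$, so capacity alone never leaves a block uncovered. What actually isolates a block is ordering plus monotonicity: the first $W$ precedes $x_1$ in $P$ and the second follows $x_2$, so their $Q$-partners lie respectively before $\omega_1^{(j)}$ (the partner of $x_1$) and after $\omega_2^{(j')}$ (the partner of $x_2$), with $j\leq j'$; hence all of $B_j$ sits strictly between them and must be covered by the $A_i$'s and $s$'s alone. You do mention ``the pinning of $x_1,x_2$'' in the same sentence, so the correct argument is within reach, but as written the load-bearing step is attributed to a capacity bound that cannot carry it. (The capacity bound $N-1$ per buffer is relevant only to the converse direction, Lemma~\ref{lem:dF1}, where one must show the buffers \emph{can} absorb the unmatched blocks.) With the $x_1$ look-ahead added and the counting replaced by the ordering argument, your plan goes through and coincides in substance with the paper's proof.
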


With the lemmas above, we can have the theorem below. 
\begin{theorem}
There is no strongly subquadratic approximation algorithm to compute the discrete \frechet distance with an approximation factor less than $1.61$ unless SETH fails. 
\end{theorem}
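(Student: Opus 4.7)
The plan is to package Lemmas \ref{lem:dF1} and \ref{lem:perp} into a standard gap-reduction argument from Orthogonal Vectors to approximating the discrete \frechet distance. Given an OV instance with $N$ vectors $u_1,\dots,u_N,v_1,\dots,v_N \in \{0,1\}^D$, I would first invoke the construction in the excerpt to build the sequences $P$ and $Q$, noting that $|P|, |Q| = \Theta(DN)$ and the construction can be performed in time $O(DN)$ (a single linear scan of the vectors, writing down a constant number of points per coordinate).

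Next I would combine the two gap lemmas into a clean dichotomy: if there exists some orthogonal pair $u_i \perp v_j$, then by Lemma \ref{lem:dF1} we have $\delta_{dF}(P,Q) \leq 1$; if no such pair exists, then by Lemma \ref{lem:perp} we have $\delta_{dF}(P,Q) \geq 1.61$. Now suppose for contradiction that for some $\alpha < 1.61$ there is an algorithm $\mathcal{A}$ that outputs an $\alpha$-approximation $\tilde{\delta}$ to the discrete \frechet distance in strongly subquadratic time $O((DN)^{2-\eps})$ for some fixed $\eps>0$. In the first case $\tilde{\delta} \leq \alpha \cdot 1 < 1.61$, while in the second case $\tilde{\delta} \geq 1.61$, so comparing the output of $\mathcal{A}$ against the threshold $1.61$ decides OV exactly.

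To convert the subquadratic bound in the size $DN$ into the form required by the SETH-hardness of OV, I would use the standard padding/balancing trick: assume without loss of generality (by the hardness result cited from \cite{williams2004new}) that the hard instances have $D = N^{o(1)}$, so that $(DN)^{2-\eps} = D^{2-\eps} N^{2-\eps} = D \cdot N^{o(1)} \cdot N^{2-\eps} \leq D \cdot N^{2-\eps/2}$ for all sufficiently large $N$. This contradicts the conditional lower bound $D \cdot N^{2-\eps'}$ for OV under SETH, yielding the theorem.

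The main obstacle in this proof is already absorbed into Lemmas \ref{lem:parralel}, \ref{lem:dF1}, and \ref{lem:perp}, where the geometric gap of $1$ vs.\ $1.61$ is established via the carefully chosen coordinates of $a_0^o,a_1^o,b_0^o,b_1^o,s,\omega_1,\omega_2,x_1,x_2$. The only subtlety left at this (reduction) level is the polynomial blow-up in the construction size versus $D$ and $N$; so long as the construction is linear in $DN$, the balancing argument above goes through, and no further geometric or combinatorial work is needed to finish the theorem.
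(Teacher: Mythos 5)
Your proposal is correct and follows essentially the same route as the paper: the theorem is obtained directly by combining Lemma~\ref{lem:dF1} and Lemma~\ref{lem:perp} into a $1$ versus $1.61$ gap reduction from Orthogonal Vectors, using the linear-size ($\Theta(DN)$) construction of $P$ and $Q$. Your explicit handling of the $(DN)^{2-\eps}$ versus $D\cdot N^{2-\eps'}$ bookkeeping is a detail the paper leaves implicit, but it is the standard balancing step and does not constitute a different approach.
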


\section{Conclusion and Future Work}
An interesting direction of future work is to experiment with the time-window \frechet distance and the metric-based edit distance for real mobility trajectories. 

\newpage

\appendix
\section{Time-Window Metrics}
\subsection{Other Time-Window Metrics}
\label{sec:othertimewindow}
The idea of enforcing a time-window constraint can be applied to other distances such as the  discrete \frechet distance~\cite{eiter1994computing} and dynamic time warping~\cite{berndt1994using}. 
For polygonal curve $A=[a_0, a_1,a_2,\cdots, a_n]$ and $B=[b_0,b_1,b_2,\cdots, b_m]$, given distance metric $d(\cdot,\cdot)$, these two distances can be computed by dynamic programming in time $O(mn)$ as follows. 
\def\c{\operatorname{D}}
\begin{equation}
\begin{aligned}[l]
&\text{Dynamic time warp:} \\ 
&\c[i][j] = d(A[i], B[j]) + \min\{\c[i][j-1],\c[i-1][j-1], \c[i-1][j]\}
\end{aligned}
\end{equation}
\begin{equation}
\begin{aligned}[l]
&\text{Discrete \frechet distance: }\\
&\c[i][j] = \max\{d(A[i],B[j]), \min\{\c[i][j-1],\c[i-1][j-1], \c[i-1][j]\}\}
\end{aligned}
\end{equation}


We can apply our time window with varying speed to these two distances. 
Assume in trajectory $A,B$, $A(t_i)=a_i, B(t_j') = b_j$. Given the time window $\sigma$, we need only compute $\c[i][j]$ for $i,j$ satisfying 
$$ |t_i -t_j'|\leq \sigma \text{ or } (t'_j<t_i-\sigma, t_{j+1}'\geq t_i-\sigma) \text{ or } (t_j'>t_i+\sigma, t_{j-1}'\leq t_i+\sigma) .$$
With similar analysis in previous subsection,  the complexity is $\max\{O(n\log n), C(n,m,\sigma)\}$, where $C(n,m,\sigma)$ is the number of valid $i,j$ pairs.

Previously, Sakoe-Chimba band~\cite{sakoe1978dynamic} has been introduced to dynamic time warping, in which given a window parameter $w$, only values of $\c[i][j]$ for $i,j$ satisfying $|i-j|\leq w$, are computed.  This works only for trajectories with uniform sampling. 

\section{Metric-based Edit Distance}
\subsection{Proof of Lemma \ref{lem:insertcost}}
\label{proof:insertcost}
\begin{proof}
The proof is by induction. 
Assume there are $\ell$ characters to delete between two characters in the string.
If $\ell=1$, the deletion cost is not affected by the order. 
If it is true with  the deletion of $\ell \leq k$ characters, for $\ell=k+1$, given a substring $z_0 z_1 \cdots z_{k+2}$, we  prove that if we delete all the $k+1$ characters between $z_0$ and $z_{k+2}$, the deletion cost is not affected by the order. 
If $z_{k+1}$ is the last deleted, we first delete  $z_1,z_2,\cdots,z_k$ between $z_0$ and $z_{k+1}$, whose cost is not affected by the order, and thereafter, delete $z_{k+1}$. 
 The total deletion cost is  
\begin{equation*}
  \begin{aligned}
&\sum_{j=1}^k \del(z_0,z_{j+1},z_j) + \del(z_0,z_{k+2},z_{k+1}) =& \sum_{j=0}^{k+1}\dist(z_j,z_{j+1})-\dist(z_0,z_{k+2}). 
\end{aligned}
\end{equation*}
The deletion cost is the  same when $z_1$ is the last deleted.
 
If $z_i$ is the last deleted for $1<i<k+1$, then $z_i$  divide the characters to delete into two parts $z_1,z_2,\cdots,z_{i-1}$ and $z_{i+1},\cdots,z_{k+1}$, both no greater than $k$. Hence, the deletion cost of both substring are not affected by the order. The total deletion cost is
\begin{equation*}
  \begin{aligned}
    & \sum_{j=1}^{i-1}\del(z_0, z_{j+1}, z_j) + \sum_{j=i+1}^{k}\del(z_{i},z_{j+1}, z_{j}) + \del(z_i,z_{k+2},z_{k+1}) + \del(z_0,z_{k+2},z_i)\\
  =&  \sum_{j=0}^{k+1}\dist(z_j,z_{j+1})-\dist(z_0,z_{k+2}). 
\end{aligned}
\end{equation*}
Hence, for $\ell=k+1$ the cost of deletions is not affected by the order. 
All above, we prove for any substring to delete, the deletion cost is not affected by the order. Similarly, we can prove the insertion cost of a substring between two neighboring characters is not affected by the order. 
\end{proof} 

\subsection{Proof for Theorem \ref{thm:metric}}
\label{proof:metric}
\begin{proof}
First, the metric-based edit distance is nonnegative, and equals to $0$ if and only if two trajectories have the same shape.  

The metric-based edit distance is symmetric. Assume we have a sequence of insertions and deletions to convert a trajectory string $A$ to a $B$ with minimum cost $\gamma$. We obtain a sequence of operations to convert $B$ to $A$, if we reverse the sequence of operations, change each deletion to an insertion, and change each insertion to a deletion.  The cost is also $\gamma$, since the cost of inserting $z$ between $x$ and $y$ is the same with the cost of deleting $z$ from $x$ and $y$. $\gamma$ is the minimum cost to convert $B$ to $A$. Otherwise, we can find a cost lower than $\gamma$ to convert $A$ to $B$ repeating the above reversing sequence procedure. 
Therefore, the metric-based edit distance is symmetric. 


The metric-based edit distance satisfies triangle inequality. If we insert one other character and delete it anytime in a procedure of converting trajectory string $A$ to $B$, the cost is no less than the metric-based distance between $A$ and $B$. This result can be generalized that the minimum cost $\theta$ of converting $A$ to $B$ during which we also insert and delete characters of a trajectory string $C$ is no less than the metric-based distance between $A$ and $B$.  The metric-based edit distance between $A$ and $C$ plus the one between $C$ and $B$ is no less than $\theta$, and hence, no less than the metric-based distance between $A$ and $B$. 
\end{proof}

\subsection{Algorithm for Metric-Based Edit Distance}
\label{sec:arbitrary}
Given two trajectories with string representation  $A=a_0a_1\dots a_na_{n+1}$ and $B=b_0b_1\dots b_mb_{m+1}$, where $a_0=b_0=S, a_{n+1}=b_{m+1}=T$, we calculate the minimum cost converting $B$ to $A$.   
$\cost(k, \ell, i, j)$ denote the minimum cost to transform from a trajectory substring $b_k b_{k+1} \cdots b_{\ell-1} b_\ell$ to $b_k  a_i a_{i+1} \cdots a_j b_\ell$ with arbitrary order of insertions and deletions.  
In the intermediate  string, all characters belonging to $A$ remain their relative order in $A$, otherwise the cost is greater.
Notice for arbitrary order $\cost(k, \ell, i, j)$ is not necessarily smaller than $\cost(k, \ell, i, j+1)$ and $\cost(k, \ell+1, i, j)$. 
We can pre-compute insertion and deletion of any sequence between any two elements in $A$ and $B$  in $O(n^2m^2)$, and we assume in the following analysis, the cost of any deletion or insertion of a sequence of characters can be obtained in $O(1)$ by checking the values pre-computed. 
Now we provide a dynamic programming algorithm to calculate the metric-based edit distance with arbitrary order of insertions and deletions.   
\begin{equation*}
\begin{aligned}
\cost(k,k,0,0) =& \cost(k, k+1, 0, 0) =  0, \\
\cost(k, \ell+1, 0, 0) =& \cost(k, \ell, 0, 0) + \del(b_k, b_{\ell+1}, b_\ell) , \\  
\end{aligned}
\end{equation*}
 \begin{equation}
 \begin{aligned}
 \cost(k, \ell+1, i, i) =&\min  \left \{
      	\begin{array}{ll}
  \cost(k, \ell, i, i) + \del(a_i, b_{\ell+1}, b_\ell), \\
\cost(k+1, \ell+1, i, i)+\del(b_k, a_i, b_{k+1}), \\
\cost(k, \ell+1, 0, 0) + \ins(b_k, b_{\ell+1}, a_i).
	\end{array}
	\right. 
 \end{aligned}
 \end{equation}
  \begin{equation}
 \begin{aligned}
 \label{Eqn:complexity}
 \cost(k, \ell, i, j+1) =&\min_{\substack{
 k',\ell' \in [k,\ell] \\
 i' \in[i,j] \\
 j'\in [i+1, j+1]
 } }   \left \{
      	\begin{array}{ll}
\cost(k, \ell', i, j+1)  + \del(a_{j+1}, b_\ell, \{b_{\ell'}, \cdots, b_{\ell-1}\}),\\
\cost(k', \ell, i, j+1) + \del(b_k, a_{i}, \{b_{k+1}, \cdots, b_{k'}\}), \\
\cost(k, \ell',i, i')+\cost(\ell', \ell,  j', j+1) + \del(a_{i'}, a_{j'}, b_{\ell'})\\ 
 + \ins(a_{i'}, a_{j'}, \{a_{i'+1}, \cdots, a_{j'-1}\})
	\end{array}
	\right. 
 \end{aligned}
 \end{equation}
 
   
\begin{lemma}
$\cost(k,l,i,j)$ provides the minimum cost converting from a trajectory substring $b_k b_{k+1} \cdots b_{\ell-1} b_\ell$ to $b_k  a_i a_{i+1} \cdots a_j b_\ell$ with arbitrary order of insertions and deletions. 
\end{lemma}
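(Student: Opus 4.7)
The plan is to prove correctness by strong induction on $N := (j-i+1) + (\ell - k - 1)$, the total number of characters that must be inserted plus deleted. The base cases $\cost(k,k,0,0) = \cost(k,k+1,0,0) = 0$ are immediate; $\cost(k, \ell+1, 0, 0)$ follows by peeling off one deletion at the right boundary at a time and invoking Lemma~\ref{lem:insertcost} to see that the iterated sum matches the closed-form block-deletion cost. For the inductive step I would establish the recurrence value equals the true optimum by matching upper and lower bounds.

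For the \emph{upper bound}, I would argue that each branch on the right side of the recurrence corresponds to an explicit conversion strategy whose cost equals that branch's value. In Case~1, apply the inductively optimal plan from $\cost(k, \ell', i, j+1)$ on the prefix to reach $b_k a_i \cdots a_{j+1} b_{\ell'}$, then delete the remaining suffix $b_{\ell'}, \ldots, b_{\ell-1}$ as a single contiguous block between the fixed neighbors $a_{j+1}$ and $b_\ell$; Lemma~\ref{lem:insertcost} guarantees this block-deletion cost is order-independent and equals $\del(a_{j+1}, b_\ell, \{b_{\ell'}, \ldots, b_{\ell-1}\})$. Case~2 is symmetric on the left side. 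Case~3 selects a pivot $b_{\ell'}$ in the middle of the source, solves the two halves independently to reach the intermediate string $b_k a_i \cdots a_{i'} b_{\ell'} a_{j'} \cdots a_{j+1} b_\ell$, then finishes with the pair of cleanup operations (deleting $b_{\ell'}$, then inserting the missing $a_{i'+1}, \ldots, a_{j'-1}$ between the now-adjacent $a_{i'}$ and $a_{j'}$), whose combined cost is again read off from the explicit formulas via Lemma~\ref{lem:insertcost}.

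For the \emph{lower bound}, I would take an arbitrary operation sequence of minimum cost and show it can be renormalized to match one of the three recurrence patterns. By Lemma~\ref{lem:insertcost}, operations lying between the same two surviving neighbors can be permuted freely at no cost change, giving the freedom to canonicalize. The key pivot-selection step: if throughout the conversion no source $b_s$ with $k < s < \ell$ ever has an inserted $a$-character on both of its sides before being deleted, then all insertions are confined to one side of the last surviving internal $b$, and we fall into Case~1 or Case~2 after reordering; otherwise, let $b_{\ell'}$ be such a pivot and observe that it cleanly partitions the remaining operations into a left sub-conversion, a right sub-conversion, and two terminal cleanup operations, matching Case~3 for the appropriate choice of indices $(k', \ell', i', j')$ in the enumeration.

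The main obstacle will be making the pivot-selection argument fully rigorous: showing that every optimal operation history, after reordering via Lemma~\ref{lem:insertcost}, really decomposes cleanly into one of the three recurrence cases, with the two halves behaving as independent sub-problems whose costs are additively captured by the inductive hypothesis. I also anticipate needing a short interchange argument near the pivot, demonstrating that insertions on opposite sides of $b_{\ell'}$ that are interleaved in time can be reordered so that one side completes before the other begins, without cost inflation and without affecting the neighbor pair seen at the moment of the pivot deletion. Once that structural claim is established, the upper and lower bounds meet and the DP correctly computes the metric-based edit distance.
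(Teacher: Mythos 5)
Your plan follows essentially the same route as the paper's proof: induction over subproblems, a three-way case analysis on how an optimal conversion terminates (trailing block deletion, leading block deletion, or deletion of a pivot $b_{\ell'}$ followed by a final block insertion between the now-adjacent $a_{i'}$ and $a_{j'}$), with Lemma~\ref{lem:insertcost} supplying the order-independence needed to treat block operations as single steps. The only difference is presentational: you spell out the lower-bound/canonicalization direction whose exhaustiveness the paper simply asserts, which is a fair point of rigor to flag but not a departure from its argument.
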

\begin{proof}
First, the claim is true for $\cost(k,\ell, 0,0)$ and $\cost(k,\ell, i, i)$. 
$\cost(k,\ell, 0, 0)$ is minimum cost of deleting all characters in $B$ between $b_k$ and $b_\ell$, which is the same in arbitrary order. 
Now we prove $\cost(k,\ell+1,i,i)$ is the minimum cost converting $b_kb_{k+1}\cdots b_{\ell+1}$ to $b_k a_i b_{\ell+1}$. Assume all $\cost(k', \ell', i,i)$ are computed. 
Now there are only three cases to consider according to the last operation we take. The first case is that we convert $b_k b_{k+1}\cdots b_{\ell}$ to $b_k a_i b_{\ell}$ and then delete $b_\ell$. 
The second case is that we convert $b_{k+1} b_{k+2}\cdots b_{\ell+1}$ to $b_{k+1} a_i b_{\ell+1}$ and then delete $b_{k+1}$. 
The last case is that we delete all characters between $b_k$ and $b_{\ell+1}$ and then insert $a_i$. The minimum cost of all three cases is $\cost(k,\ell+1,i,i)$, which provides the minimum cost converting $b_kb_{k+1}\cdots b_{\ell+1}$ to $b_k a_i b_{\ell+1}$. 

Now we prove $\cost(k,\ell,i,j+1)$ provides the minimum cost converting from a trajectory substring $b_k b_{k+1} \cdots b_{\ell-1} b_\ell$ to $b_k  a_i a_{i+1} \cdots a_j b_\ell$ with arbitrary order of insertions and deletions. 
Assume all $\cost(k,k',i',j'),\cost(\ell',\ell, i',j')$ are computed for  
 $k',\ell' \in [k,\ell],  i' \in [i,j],  j' \in [i+1, j+1]$.
We also have three cases with respect to the last operation.  We treat deletions or insertions of consecutive characters in a string as one operation. As for deletion, there are only two cases that are not handled by previous computation.  
The first case is that we convert $b_{k} b_{k+1}\cdots b_{\ell'}$ to $b_{k} a_i a_{i+1} \cdots a_{j+1} b_{\ell'}$ and then delete $b_{\ell'} b_{\ell'+1}\cdots b_{\ell-1}$. The second case is that we convert $b_{k'} b_{k'+1}\cdots b_{\ell}$ to $b_{k'} a_i a_{i+1} \cdots a_{j+1} b_{\ell}$ and then delete $b_{k+1} b_{k+2} \cdots b_{k'}$. There is only one case not computed if the last operation is an insertion. It is described in the following. 
We convert $b_k b_{k+1} \cdots b_{\ell'}$ to $b_k a_i a_{i+1} \cdots a_{i'} b_{\ell'}$ and convert $b_{\ell'} b_{\ell'+1} \cdots b_{\ell}$ to $b_{\ell'} a_{j'} a_{j'+1} \cdots a_{j+1} b_{\ell}$. Then 
we delete $b_{\ell'}$ and finally insert $a_{i'+1} a_{i'+2} \cdots a_{j'-1}$. The minimum cost of all these three cases is  $\cost(k,\ell,i,j+1)$, which provides the minimum cost converting from  $b_k b_{k+1} \cdots b_{\ell-1} b_\ell$ to $b_k  a_i a_{i+1} \cdots a_j b_\ell$. 
\end{proof}
With the recursion above, $\cost(0,m+1,0,n+1)$ is the metric-based edit distance between trajectory $A$ and $B$. The complexity is decided by the third case in Equation \ref{Eqn:complexity}, which is $O(n^3 m^3 (n+m))$. Therefore, we have the following lemma. 
\begin{lemma}
The algorithm computes the metric-based edit distance in $O(n^3m^3(n+m))$. 
\end{lemma}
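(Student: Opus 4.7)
The plan is to bound the running time by separately accounting for (i) the size of the DP table, (ii) a one-time preprocessing of insertion and deletion costs, and (iii) the per-cell work in the recurrence Eqn:complexity.

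First I would count table entries. The state $\cost(k,\ell,i,j)$ has $k,\ell \in \{0,1,\dots,m+1\}$ and $i,j \in \{0,1,\dots,n+1\}$, so the table has $O(n^2 m^2)$ cells. Before filling the table I would precompute $\ins(x,y,Z)$ and $\del(x,y,Z)$ for every choice of anchors $x,y$ from $A$ or $B$ and every contiguous substring $Z$ of $A$ or $B$. By Lemma \ref{lem:insertcost} the cost is independent of order and can be written as a telescoping sum along consecutive characters of $Z$ minus $\dist(x,y)$ plus $\dist(x,\text{first})$ and $\dist(y,\text{last})$, so each value is obtained from the previous one in $O(1)$ by extending $Z$ by one character. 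Thus all $O(n^2 m^2)$ such values are tabulated in $O(n^2 m^2)$ time, and every $\ins/\del$ query in the recurrence is answered in $O(1)$.

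Next I would analyze the work to fill one cell $\cost(k,\ell,i,j+1)$ using Eqn:complexity. The first two branches each minimize over a single cut point ($\ell'$ or $k'$ in $[k,\ell]$) and therefore contribute $O(m)$ operations. The dominating third branch ranges over $\ell' \in [k,\ell]$, $i' \in [i,j]$, and $j' \in [i+1,j+1]$, giving at most $O(m) \cdot O(n) \cdot O(n) = O(m n^2)$ triples, each evaluated in $O(1)$ using two already-computed DP values plus two $O(1)$ cost lookups from the preprocessing. The base-case recurrences for $\cost(k,\ell,0,0)$ and $\cost(k,\ell,i,i)$ are cheaper than this, so per-cell work is $O(m n^2)$.

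Before multiplying I would verify well-foundedness of the order in which cells are filled: in every branch of Eqn:complexity the recursive calls use either a strictly shorter $B$-substring ($\ell - k$ decreases) or, for equal $B$-range, a strictly shorter $A$-substring ($j-i$ decreases), so a lexicographic ordering on $(\ell-k,\, j-i)$ justifies computing the cells in increasing order. Multiplying the $O(n^2 m^2)$ cells by the $O(m n^2)$ per-cell cost gives $O(n^4 m^3)$. By the symmetry between $A$ and $B$ (swapping their roles yields an equivalent algorithm with per-cell cost $O(n m^2)$ and total time $O(n^3 m^4)$), the running time can be written uniformly as $O(n^3 m^3 (n+m))$, dominating both bounds, plus the $O(n^2 m^2)$ preprocessing which is absorbed.

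The main obstacle is really the bookkeeping of the third branch of Eqn:complexity: one must carefully check that the ranges of $\ell', i', j'$ are exactly as stated, that both recursive terms $\cost(k,\ell',i,i')$ and $\cost(\ell',\ell,j',j+1)$ are available when this cell is evaluated, and that the combined $\del + \ins$ term truly costs $O(1)$ given the precomputation. Once these details are nailed down, the final multiplication is routine arithmetic.
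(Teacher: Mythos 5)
Your proposal is correct and follows essentially the same route as the paper: the paper likewise precomputes the sequence insertion/deletion costs so each lookup is $O(1)$ and attributes the running time to the third branch of the recurrence in Equation~\ref{Eqn:complexity}, over the $O(n^2m^2)$ table cells. Your accounting ($O(mn^2)$ per cell, hence $O(n^4m^3)$, which is absorbed by the stated $O(n^3m^3(n+m))$ bound) just makes explicit the arithmetic the paper leaves implicit.
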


\subsection{Algorithm for Insertion-first Metric-Based Edit Distance}
\label{sec:insertFirst}
\begin{wrapfigure}{r}{0.5\textwidth}
\centering
\vspace*{-4mm}
\includegraphics[width=0.3\textwidth]{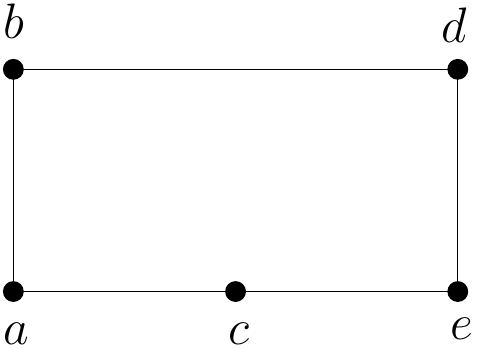}
\caption{In this figure, the coordinates of location characters are as following, $a=(0,0),b=(0,1), c=(1,0),d=(2,1), e=(2,0)$. We have trajectories with string representations, $A=ae,B=ace, C=acde$. $D_i(B,A)=0$, $D_i(B,C)=4\sqrt{2}-2$, $D_i(A,C)=2$. We have $D_i(B,C)>D_i(B,A)+D_i(A,C)$. The triangle inequality is not satisfied. }
\label{Fig:nonmetric}
\end{wrapfigure}

We also propose an insertion-first metric-based edit distance. That is, we require all insertions are done before any deletion, when converting from one trajectory string to the other.  This insertion-first version does not satisfy the triangle inequality. Denote this distance as $D_i(\cdot, \cdot)$. 
An example can be found in Figure \ref{Fig:nonmetric}.

Given two trajectories with string representation  $A=a_0a_1\dots a_na_{n+1}$ and $B=b_0b_1\dots b_mb_{m+1}$, where $a_0=b_0=S, a_{n+1}=b_{m+1}=T$, we first insert all characters in $A$ to $B$ obtaining an intermediate string, and thereafter, we delete all characters in $B$ from the intermediate string. 
Now we calculate the minimum cost converting $B$ to $A$ with all insertions before any deletion. 
Let $\DP(i,j,\alpha, \beta)$  denote the minimum cost of converting $b_0 \dots b_j$ to $a_0 \dots a_i$ on the conditions defined by $\alpha$ and $\beta$. 
$\alpha,\beta$ are used to guarantee that  the cost of insertions and deletions are calculated with respect to the neighbors. 
 If $\alpha=0$, the last operation of the conversion is inserting $a_i$, while  if $\alpha=1$, the last operation is deleting $b_j$.
 $\beta=0$ means after converting $b_0 \dots b_j$ to $a_0 \dots a_i$  the next character in the intermediate string is $a_{i+1}$, while $\beta=1$ means the next character is $b_{j+1}$.
$\DP(i,j,\alpha, \beta)$ is defined by recurrence in Equation \ref{DP}.

\newcounter{mytempeqncnt}
\begin{figure*}[!ht]
\normalsize

\setcounter{mytempeqncnt}{\value{equation}}

 \begin{equation}
 \label{DP}
   \begin{aligned}
     \DP(0,0,\alpha,\beta) &=  0 \quad  \alpha\in\{0,1\}, \beta\in \{0,1\} \\
     \text{For  } & 0<i\le n, 0<j\le m, & \\
    \DP(i,0,\alpha,\beta) &= \left \{ 
      	\begin{array}{ll}
	  \DP(i-1,0,0,0)+\ins(a_{i-1}, b_1, a_i)    & \alpha=0,\beta\in\{0,1\}\\
	  \infty & \alpha=1, \beta\in \{0,1\} \\
	\end{array}
	\right.
	\\
     \DP(0,j,\alpha,\beta)&=\left \{
      	\begin{array}{ll}
	  \DP(0,j-1,1,1) + \del(a_0,b_{j+1}, b_j) & \alpha=1,\beta=1 \\
	  \DP(0,j-1,1,1)+\del(a_0, a_1, b_j)    & \alpha=1,\beta=0 \\
	  \infty & \alpha=0, \beta\in\{0,1\} \\
	\end{array}
	\right.
	\\
	 \DP(i,j,\alpha,\beta) &=
     \left \{
      	\begin{array}{ll}
	  \min\{\DP(i-1,j,0,0) + \ins(a_{i-1},b_{j+1},a_i), \\
      \qquad \DP(i-1,j,1,0)+\ins(b_j, b_{j+1},a_i)\} & \alpha=0,\beta\in\{0,1\} \\
	  \min\{\DP(i,j-1,0,1) ,\DP(i,j-1,1,1)\}+ \del(a_{i},a_{i+1}, b_j) & \alpha=1,\beta=0 \\
	  \min\{\DP(i,j-1,0,1), \DP(i,j-1,1,1)\}+\del(a_i, b_{j+1},b_j) & \alpha=1,\beta=1 \\
	\end{array}
	\right.
   \end{aligned}
 \end{equation}

\setcounter{equation}{\value{mytempeqncnt}}
\end{figure*}




 With the recurrence, we can run dynamic programming to calculate $\DP(i,j,\alpha,\beta)$ with  time complexity of $O(mn)$.
 The insertion-first metric-based edit distance is $\min \{\DP(n+1, m+1, \alpha, \beta) | \alpha,\beta \in\{0,1\}\}$.
 \begin{lemma}
   $\DP(i,j,\alpha,\beta)$ provides the minimum cost of transforming from $b_0b_1\cdots b_i$ to $a_0a_1 \cdots  a_j$.
 \end{lemma}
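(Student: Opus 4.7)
The plan is to prove correctness by strong induction on $i+j$, interpreting $\DP(i,j,\alpha,\beta)$ operationally as the minimum cost of an insertion-first conversion that has placed the prefix $a_0,\ldots,a_i$ and processed (inserted through or deleted) the prefix $b_0,\ldots,b_j$, subject to a boundary state in which $\alpha$ records whether the most recent operation touching the boundary was the insertion of $a_i$ ($\alpha=0$) or the deletion of $b_j$ ($\alpha=1$), and $\beta$ records whether the next character immediately to the right of the boundary in the interleaved intermediate string is $a_{i+1}$ ($\beta=0$) or $b_{j+1}$ ($\beta=1$). This boundary state is precisely what is needed to evaluate $\ins$ and $\del$, which require both a left and a right neighbor.

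Before running the induction, I would invoke Lemma~\ref{lem:insertcost} to reduce to a canonical execution order. Any insertion-first conversion corresponds to choosing, for each character of $A$, a slot between consecutive $b_j$'s into which to insert it; Lemma~\ref{lem:insertcost} says that the sum of insertion costs within a single slot, and the sum of deletion costs over any contiguous block of $B$-characters between two surviving neighbors, are independent of the intra-phase order. Hence the total cost of a conversion depends only on the resulting interleaving, and without loss of generality we may assume every operation is performed left-to-right along the final interleaved string. Such canonical executions are in bijection with paths of DP transitions from the $(0,0)$ corner.

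For the base cases I would verify directly: $\DP(0,0,\alpha,\beta)=0$ since only the dummy $a_0=b_0=S$ is present; $\DP(i,0,0,\beta)$ inserts $a_1,\ldots,a_i$ one by one into the single slot between $S$ and $b_1$, each with left neighbor $a_{k-1}$ and right neighbor $b_1$, matching the recurrence by induction on $i$; the $\infty$ entries $\DP(i,0,1,\cdot)$ and $\DP(0,j,0,\cdot)$ reflect that no deletion (resp.\ insertion) can be the last operation when $j=0$ (resp.\ $i=0$); and $\DP(0,j,1,\beta)$ deletes $b_1,\ldots,b_j$ in order anchored by $a_0$, with $\beta$ selecting the right neighbor of the final deletion. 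For the inductive step I would split on $\alpha$: when $\alpha=0$ the predecessor is necessarily $(i-1,j,\alpha',0)$ with $\beta'=0$ forced because the position just to the right of $i-1$ in the boundary is now occupied by $a_i$, and the insertion of $a_i$ uses $a_{i-1}$ or $b_j$ as left neighbor according to $\alpha'$ and always $b_{j+1}$ as right neighbor because no deletion to the right of $b_j$ has occurred yet; when $\alpha=1$ the predecessor is $(i,j-1,\alpha',1)$ with $\beta'=1$ forced, and the deletion of $b_j$ uses $a_i$ as left neighbor and either $a_{i+1}$ or $b_{j+1}$ as right neighbor according to $\beta$. Each case matches exactly the recurrence in Equation~\ref{DP}, and the minimum over these options yields the minimum cost by the induction hypothesis.

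The main obstacle I anticipate is justifying that the pair $(\alpha,\beta)$ really suffices to capture all of the local context needed for the $\ins$ and $\del$ formulas. This rests on two independent invariants: the insertion-first discipline guarantees that when any $a_i$ is inserted, no $b_{j'}$ with $j'>j$ has been deleted, so $b_{j+1}$ is indeed its right neighbor at insertion time; and Lemma~\ref{lem:insertcost} guarantees that within each slot or deletion block the intra-phase ordering is irrelevant, so the left-to-right canonical order can be used without changing the total cost. Carefully aligning these invariants across the $\alpha=0$ and $\alpha=1$ transitions---in particular checking that the forced values of $\beta'$ in each recursive call are consistent with the operation just performed at the boundary---is the technical crux; once that is established, correctness of the recurrence follows, and the stated minimum over $\DP(n+1,m+1,\alpha,\beta)$ gives the insertion-first metric-based edit distance in $O(mn)$ time.
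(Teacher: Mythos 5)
Your proof is correct and follows essentially the same route as the paper's: induction over the DP table with a case analysis on the last operation and on the boundary state $(\alpha,\beta)$. Your explicit appeal to Lemma~\ref{lem:insertcost} to reduce to a canonical left-to-right execution (and your use of $b_{j+1}$, rather than $a_{i+1}$, as the right neighbour at insertion time, matching the recurrence) makes the argument tighter than the paper's own prose, which leaves the order-independence step implicit.
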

 \begin{proof}
First, it is initially true on row $0$ and column $0$. 

Assume it is true for  $\DP(i-1,j,\alpha,\beta)$, $\DP(i,j-1,\alpha,\beta)$. We calculate $\DP(i,j,\alpha,\beta)$ transforming $b_1,b_2,\cdots,b_j$ to  $a_1,a_2,\cdots,a_i$. 
If $\alpha=0$,  let's consider  $\beta=0$, where the last operation is inserting $a_i$ and the next character in the intermediate string is $a_{i+1}$. If the previous character of $a_i$ in the intermediate string is $b_j$,
the insertion cost is $\ins(b_j,a_{i+1},a_i$, while if the previous is $a_{i-1}$, 
the insertion cost is $\ins(a_{i-1},a_{i+1},a_i)$. The minimum of $\DP(i-1,j,0,0) + \ins(a_{i-1},a_{i+1},a_i)$ and $\DP(i-1,j,1,0)+\ins(b_j,a_{i+1}, a_i)$
provides the minimum cost. The same goes with $\alpha=0,\beta=1$.

If $\alpha=1$, let's consider $\beta=0$, where the last operation is deleting $b_j$ and the next character in the intermediate string is $a_{i+1}$. 
The previous character of $b_j$ when deleting $b_j$ must be $a_{i-1}$, since all the $b_1,\cdots,b_{j-1}$ are deleted, and $a_{i}$ hasn't been inserted.
The minimum of $ \DP(i,j-1,0,1) ,\DP(i,j-1,1,1)$ plus $ \del(a_{i},a_{i+1}, b_j)$ is the minimum cost.
The same goes with $\alpha=1, \beta=1$.

All above, $\DP(i,j,\alpha,\beta)$ provides the minimum cost of transforming from $b_1,b_2,\cdots,b_i$ to $a_1,a_2,\cdots, a_j$.  
 \end{proof}

 \begin{lemma}
The algorithm computes the metric-based edit distance in running time $O(nm)$. 
 \end{lemma}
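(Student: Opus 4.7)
The plan is to do a straightforward state-counting argument on the dynamic program in Equation \ref{DP} and verify that each transition is evaluated in constant time. First, I would enumerate the state space: the table $\DP(i,j,\alpha,\beta)$ ranges over $i \in \{0,1,\dots,n+1\}$, $j \in \{0,1,\dots,m+1\}$, $\alpha \in \{0,1\}$ and $\beta \in \{0,1\}$. This gives $4(n+2)(m+2) = O(nm)$ entries to fill in, so if each entry can be computed in $O(1)$ time once its predecessors are known, the total running time is $O(nm)$.

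Next, I would verify that the per-cell work is constant. Inspecting each branch of the recurrence in Equation \ref{DP}, the update for $\DP(i,j,\alpha,\beta)$ takes a minimum over at most two previously computed table entries, each modified by a single $\ins(\cdot,\cdot,\cdot)$ or $\del(\cdot,\cdot,\cdot)$ term whose three arguments are indexed in $O(1)$ from $i,j$ (neighbors $a_{i-1},a_i,a_{i+1}$ and $b_{j-1},b_j,b_{j+1}$). Assuming oracle access to the underlying metric $d(\cdot,\cdot)$ between any two location symbols in $O(1)$, both $\ins(x,y,z) = d(x,z)+d(y,z)-d(x,y)$ and $\del(x,y,z) = d(x,z)+d(y,z)-d(x,y)$ are evaluated in constant time. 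Hence every table entry is computed in $O(1)$ time.

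I would then handle the evaluation order: processing $(i,j)$ in row-major (or any order where $(i-1,j)$ and $(i,j-1)$ precede $(i,j)$) ensures that whenever $\DP(i,j,\alpha,\beta)$ is computed, the constantly many entries it depends on—$\DP(i-1,j,\cdot,\cdot)$ and $\DP(i,j-1,\cdot,\cdot)$—are already available, and the boundary values $\DP(0,0,\alpha,\beta)$, $\DP(i,0,\alpha,\beta)$, $\DP(0,j,\alpha,\beta)$ are filled in $O(n+m)$ total time using the base cases. Finally, the output, $\min\{\DP(n+1,m+1,\alpha,\beta) : \alpha,\beta \in \{0,1\}\}$, is extracted in $O(1)$ additional time.

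Putting this together yields total running time $O(nm) + O(n+m) + O(1) = O(nm)$, which is the bound claimed. There is no real obstacle beyond bookkeeping; the only point worth making explicit is the $O(1)$-per-cell evaluation of $\ins$ and $\del$ under the assumption of constant-time access to the base metric $d$, and that assumption is the same one used throughout the paper.
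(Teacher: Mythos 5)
Your proposal is correct and matches the paper's (essentially implicit) argument: the paper simply observes that the recurrence in Equation~(\ref{DP}) defines $O(nm)$ states, each computed in constant time since every transition consults at most two predecessor entries and one $\ins$ or $\del$ term over constantly many neighboring symbols. Your only addition is making explicit the $O(1)$ metric-oracle assumption and the evaluation order, which is sound bookkeeping rather than a different route.
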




\section{\texorpdfstring{$k$}-gather Clustering}
\subsection{Proof for Theorem \ref{thm:hardedit}}
\label{sec:hardedit}
\begin{proof}
\label{proof:hardedit}
The proof for edit distance and metric-based edit distance is similar. 
	We first show that $k$-gather of trajectories on edit distance is NP-hard by reduction from the $3$SAT problem, where each variable appears at most $3$ times and each literal belongs to at most $2$ clauses \cite{garey2002computers}. 
    We have a boolean formula $\mathcal{F}$ in $3$-CNF form with $m$ clauses $C_j$ and $n$ variables $x_i$. $\mathcal{F} = C_1 \wedge \dots \wedge C_m$. 
    Each clause contains $3$ literals connected by $\vee$, where a literal is either a positive literal $x_i$ or a negative literal $\overbar{x_i}$, which is the negation of $x_i$.  
	From the boolean formula, we create a trajectory set, such that there is a solution to the $k$-gather problem with the radius of the edit distance of $5$, if and only if $\mathcal{F}$ has a satisfying assignment.

	We construct the gadget as following. First we define a planar graph on which the trajectories visit. Each face of the planar graph represents a character in the string representation of trajectories. The faces are clustered into $4n$ cells $c_1, c_2,\cdots, c_{4n}$. The first $3n$ cells each consists of $3$ faces each: $2$ rectangle faces side by side and $1$ triangle face on top of them, while for the last $n$ cells each consists of $2$ rectangle faces side by side and $2$ triangle faces adjacent to both rectangle faces as shown in Figure \ref{fig:NPproof}. 
	
All trajectories go from $c_1$ to $c_{4n}$ through all the rectangular faces. But some trajectories take `detours' to visit some of triangular faces. We call that a top detour (visiting the face above the rectangular faces) or a bottom detour (visiting the face below the rectangular faces). 
For the part of trajectories within each cell, the edit distance between a trajectory with a detour and a trajectory going straight is $1$, and the edit distance between a trajectory with a top detour and  one with a bottom detour is $2$.
	
Now we  introduce variable gadgets. For each variable $x_i$, we construct two trajectories for variable assignment $x_i=\mbox{True}$ and $x_i=\mbox{False}$ respectively. Both trajectories have top detours in $c_{3i-2}, c_{3i-1}, c_{3i}$; the trajectory of $x_i=\mbox{True}$ has a top detour in cell $c_{ 3n+i}$ while the trajectory of $x_i=\mbox{False}$ has a bottom detour in that cell; and there are no detours for all the other cells in the two trajectories. The edit distance between the trajectory for $x_i=\mbox{True}$ and $x_i=\mbox{False}$ is $2$.  
	In addition, we construct $k-3$ more supplement trajectories for $x_i$ which have top detours only in $c_{3i-2}, c_{3i-1}, c_{3i}$ and no other detours in the rest cells.  The edit distance between these supplement trajectories and $x_i=\mbox{True},\mbox{False}$ is $1$.  
	For each clause, we create $3$ trajectories such that for each variable $x_i$ in the clause, each of the three trajectories has a top detour in cell $c_{3n+i}$ , else if $\overbar{x_i}$ is in the clause, each of the three trajectories has a bottom detour in cell $c_{3n+i}$; the three trajectories have no detours in the rest cells. 
	
	If  $\mathcal{F}$ is satisfiable, there is a solution for $k$-gather with respect to the assignment. For each variable $x_i$, if $x_i=\mbox{True}$, we pick the trajectory of $x_i=\mbox{True}$ as the center of a cluster; otherwise we pick the trajectory of $x_i=\mbox{False}$ as the center of a cluster. To form the cluster, we first add the trajectory of the other variable assignment to the cluster. 
    Thereafter,  we add the $k-3$ supplement trajectories with top detours in cell $c_{3i-2},c_{3i-1},c_{3i}$. 
    Finally, add a trajectory of a clause containing literal $x_i$ if $x_i=\mbox{True}$, or literal $\overbar{x_i}$ if $x_i=\mbox{False}$,  to this cluster. It is guaranteed to get such a trajectory of a clause since we have three copies for each clause with three variables.  
Thus, we form a cluster with $k$ trajectories. We can obtain a $k$-gather clustering by repeating the above process on all variables and if any trajectory of a clause is left after all clusters are formed, we can simply add the clause trajectory to any cluster centered on the literals it contains. 
	The maximum radius of all such clusters is $5$, which is the distance between a trajectory of $x_i=\mbox{True/False}$ and a trajectory of a clause containing the corresponding literal $x_i/\overbar{x_i}$. 

If we have a $k$-gather clustering with radius $5$, now we prove that we can obtain a satisfying assignment of $\mathcal{F}$ from the clustering. First, trajectories of $x_i=\mbox{True}$ and $x_i=\mbox{False}$ have at most $k+6$ other trajectories within distance of $5$.  Since each variable cannot appear more than three times in the clause, there are at most $9$ trajectories of clauses within distance $5$ from these two trajectories.  Adding to the $k-3$ supplement trajectories for $x_i$ within distance $1$ 
   which have detours only on cell $c_{3i-2},c_{3i-1},c_{3i}$, we get $k+6$ other trajectories within distance $5$ from trajectories of $x_i=\mbox{True}$ and $x_i=\mbox{False}$. Hence, 
there are not enough trajectories to make both of them as centers of two clusters within radius $5$, if $k>13$. 
	At most one of the trajectories of $x_i=\mbox{True}$ and $x_i = \mbox{False}$ can be chosen as the center of a cluster.  
Second, as	for a trajectory of a clause, since each literal can appear at most twice in clauses, there are at most $9$ other trajectories representing clauses within distance $5$. 
Each clause has at most $3$ literals, therefore, there are at most $3$ trajectories of variables within distance $5$ from a trajectory representing a clause. 
In all, there are at most $12$ other trajectories within distance $5$ from a trajectory of a clause. 
For $k>13$, a trajectory of a clause can not be the center of a cluster. 
Third, a supplement trajectory for $x_i$ with top detours only on cell $c_{3i-2},c_{3i-1},c_{3i}$, has $k-2$ other trajectories within distance $5$, including the $k-3$ same trajectories and $2$ trajectories of $x_i=\mbox{True}$ and $x_i=\mbox{False}$. 
Therefore, a supplement trajectory cannot be the center of a cluster. 
All above, exactly one of the trajectories of $x_i=\mbox{True}$ and $x_i=\mbox{False}$ can be chosen as the center of a cluster.  
A clause is satisfied by the variable assignment of the center trajectory, when the clause is contained in a cluster. 
Therefore, those trajectories acting as centers of clusters provide us the assignment of $\mathcal{F}$.
	
	All above, there is a solution to the $k$-gather problem with the maximum radius of $5$  if and only if $\mathcal{F}$ has a satisfying assignment. 
    We can use the same proof for metric-based edit distance, 
    if we set the distance from the center of a triangle face to the center of a rectangle face in each cell to be $1$.
\end{proof}

\begin{figure*}[htpb]
	\centering
	\includegraphics[width=\textwidth]{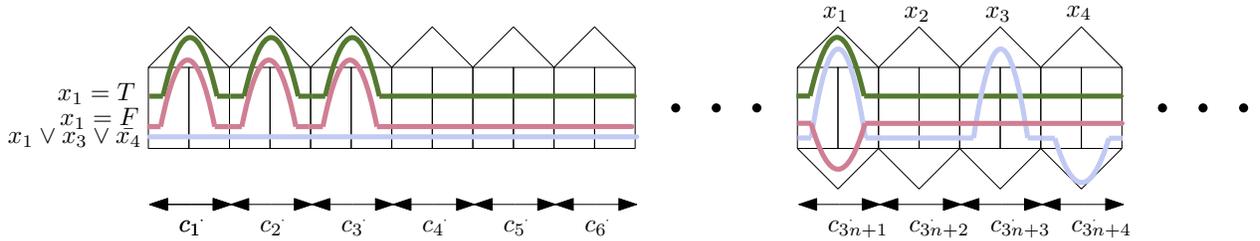}
	\caption{The example of the gadget.}
	\label{fig:NPproof}
\end{figure*}

\subsection{Hardness on the Jaccard Distance}
\label{sec:hardnessJacc}
A trajectory with string representation can be processed as a set of \emph{$w$-shingles}. 
\begin{definition}[$w$-shingle]
A $w$-shingle is a sequence of $w$ consecutive location strings on trajectory with string representation.
\end{definition}
We combine all $w$-shingles from each trajectory into a set. 
Thereafter, we introduce Jaccard distance over the set of $w$-shingles as a metric between two trajectories. 
\begin{definition}[Jaccard distance]
For two sets $A$ and $B$,
the \emph{Jaccard distance} is defined as $$d(A,B) = 1-\frac{\left | A \cap B \right |}{\left | A \cup B \right |}.$$

\end{definition}

Computing the Jaccard distance between two trajectories with $O(n)$ locations  can be done in $O(nw)$ time. 

\begin{theorem}
	The $k$-gather problem on trajectories with Jaccard distance is NP-hard. 
\end{theorem}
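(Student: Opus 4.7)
The plan is to mimic the reduction in Theorem~\ref{thm:hardedit}, starting from the same restricted 3SAT variant (each variable in at most three clauses, each literal in at most two). The cell-and-detour gadget with $4n$ cells carries over verbatim as the underlying geometric layout, so the work is concentrated in (a) translating detours into shingles, and (b) picking a Jaccard radius $\tau$ that plays the role of the threshold $5$ in the edit-distance proof.

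First I would assign a unique location character to every face of the planar graph and fix the shingle length $w$ at a small value, most naturally $w=2$. Under this choice a straight pass through a cell contributes a single shingle $(r_1,r_2)$, while a top detour contributes two shingles $(r_1,t),(t,r_2)$ and a bottom detour contributes $(r_1,b),(b,r_2)$. Because the character sets of different cells are disjoint, no cross-cell shingle accidentally coincides, so the symmetric difference of the shingle sets of any two trajectories decomposes as a sum of constant contributions, one per cell in which their detour choices differ. In particular, every Jaccard distance that shows up in the argument is a specific rational with denominator $|A\cup B|=\Theta(n)$.

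Second I would build the variable and clause gadgets exactly as in the edit-distance reduction. For each $x_i$ the trajectory for $x_i=\mathrm{True}$ and the one for $x_i=\mathrm{False}$ agree on top detours in $c_{3i-2},c_{3i-1},c_{3i}$ and differ only on $c_{3n+i}$ (top versus bottom); I would also add $k-O(1)$ supplement trajectories per variable that share these three top detours but go straight through $c_{3n+i}$. For each clause I would insert three identical trajectories whose detour in $c_{3n+i}$ follows the polarity of $x_i$ in that clause and which go straight everywhere else. I would then pick $\tau$ to equal the Jaccard distance between a variable-assignment trajectory and a clause trajectory that shares the relevant literal; by direct computation this will be comparable to the Jaccard distance between the two assignments of the same variable, so the prospective cluster structure mirrors the one constructed for edit distance.

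The main obstacle will be ensuring a clean strict separation at the threshold $\tau$. Because Jaccard distance is a ratio, an additional disagreeing cell changes the numerator and denominator together and may shift the value on either side of $\tau$ by an $O(1/n)$ amount; I expect that some padding of the common corridor, for instance by replicating a constant number of shingles that appear in every trajectory, will be needed so that the gap between "matches the literal" and "conflicts with the literal" is bounded away from zero independently of $n$. Once this quantitative separation is secured, the counting argument from Theorem~\ref{thm:hardedit} transfers: a supplement or clause trajectory has too few neighbors within radius $\tau$ to serve as a cluster center for sufficiently large $k$, and the $k$-bound also rules out both of the True/False pair simultaneously acting as centers, so the chosen centers must encode a consistent assignment and the inclusion of a clause trajectory in a cluster certifies that the clause is satisfied. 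Conversely, any satisfying assignment yields a radius-$\tau$ clustering with $k$ trajectories per cluster by the same construction used in the edit-distance case.
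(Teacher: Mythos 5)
Your proposal follows essentially the same route as the paper: reuse the gadget of Theorem~\ref{thm:hardedit} with $2$-shingles and set the radius to the Jaccard distance between a variable-assignment trajectory and a clause trajectory sharing its literal, which works out to the paper's stated value $\frac{15}{8n+10}$. The separation worry you raise does need checking but resolves without any padding, since every pair that must lie outside the ball (e.g.\ a conflicting clause trajectory at distance $\frac{19}{8n+12}$, or supplements of distinct variables at $\frac{18}{8n+11}$) strictly exceeds $\frac{15}{8n+10}$ for all $n$.
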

\begin{proof}
	The proof use the same gadget as Theorem \ref{thm:hardedit}. 
There is a solution to the $k$-gather problem if and only if the maximum radius of clusters with Jaccard distance on $2-$shingles is $\frac{15}{8n+10}$. 

\end{proof}

\subsection{Approximation Algorithm}
\label{sec:approxAlgo}

Fortunately for the $k$-gather problem there is a $2$-approximation algorithm as proved in \cite{aggarwal2006achieving}. 
We provide the $2$-approximation algorithm similar to \cite{aggarwal2006achieving} to find out the minimum radius for $k$-gather problem with edit distance, metric-based edit distance and Jaccard distance to make our work complete.
First, we compute the distance for all pairs of trajectory strings.
We try all the values of distance to be the radius $R$ of the $k$-gather clustering, so that  we find the smallest $R$ satisfying the two conditions:
\begin{enumerate}
  \item  There are at least $k-1$ other trajectories within  the distance of $R$ taking each trajectory as the center. 
  \item Initially mark all the trajectories as uncovered , let the set of center of cluster $C$ be an empty set. Denote the set of trajectories  as $V$. 
    Repeat the following procedure until all trajectories are covered: select an arbitrary unmarked trajectory $v\in V$ as a center. If there is any uncovered trajectory within the distance of $R$ from $v$,  we add $v$ to $C$, and mark all the  trajectories within distance $R$ from $v$ as covered.
   
    Thereafter, we create a flow network as following. Create a source $s$,  sink $t$, nodes for each center in $C$ and nodes for each trajectory in $V$. For each center $c\in C$, add edges from $s$ to $c$  with capacity $k$; meanwhile add edges with unit capacity from $c$ to all the trajectories that are within distance $R$ from $c$ .
    For each trajectory $v \in V$, add edges from $v$ to $t$ with unit capacity.
    Check if there is a flow of capacity $k\left | C \right |$ from $s$ to $t$. 
    If so, the flow provides a solution for $k$-gather. If not, exit with failure.

\end{enumerate}

\section{Subquadratic Hardness for the Discrete \frechet Distance}
\label{proof:parralel}
\subsection{Proof for Lemma \ref{lem:parralel}}

\begin{proof}
Any coupling $\beta$ starts from the pair $(A_{i,1}, B_{j, 1})$ and ends at $(A_{i,D}, B_{j, D})$. 
If $\beta$ is not a parallel coupling, there must exist a pair $(A_{i,k}, B_{j,k'})$ in $\beta$ such that $k$ and $k'$ are one odd and one even. 
The corresponding points are on different sides of the horizontal axis. 
It forces $\delta(\beta) \geq \min \{d(a_1^o, b_0^e), d(a_0^o, b_1^e) \} = 1.65>1.61$. 

Now we consider the case that $\beta$ is a parallel coupling. 
If $u_i\perp v_j$, for each pair of points in the coupling, at least one point is in $\{a_0^o, a_0^e, b_0^o, b_0^e \}$. Since $d(a_0^p, b_{\{0,1\}}^p)=1, d(b_0^p, a_{\{0,1\}}^p)=1$, for $p\in \{o,e\}$, $\delta(\beta) \leq 1$. 
If $u_i \not \perp v_j$, there  exists at least one pair of points in the coupling that are either $(a_1^o, b_1^o)$ or $(a_1^e, b_1^e)$. Therefore, $\delta(\beta) \geq \min\{d(a_1^o, b_1^o), d(a_1^e, b_1^e) \} = 1.61$.
\end{proof}

\subsection{Proof for Lemma \ref{lem:dF1}}
\begin{proof}
\label{proof:dF1}
To be simple, for a pair $(p,q)$ in a coupling of $P,Q$, we just denote that $P$ goes to $p$ and $Q$ goes to $q$. 
For $u_i \perp v_j$, the coupling of  $P$ and $Q$ is constructed  as follows: 
\begin{enumerate}
\item $P$ goes through $D(N-j)$ points of $W$ while $Q$ stays at $\omega_1$. 
\item We do a parallel coupling between $B_k$ for $k=1,\dots,j-1$ and the rest of $W$. When $Q$ goes to $\omega_1$ and $\omega_2$ at the start and end of each $B_k$, $P$ stays at $a_0^e$. 
\item $P$ goes to $x_1$ and $Q$ goes to $\omega_1$. $Q$ stays at $\omega_1$ until $P$ proceeds and goes to the $s$ before $A_i$. 
\item $P$ and $Q$ traverse $A_i$ and $B_j$ in parallel. $P$ goes to $s$ and  $Q$ goes to $\omega_2$.  
\item $P$ continues to $x_2$ while $Q$ stays at $\omega_2$. 
\item $Q$ goes to $\omega_1$ and does the same parallel coupling as step 2 for the rest of $B_k$. 
\item $P$ finishes the rest of $W$, while $Q$ stays at $\omega_2$. 
\end{enumerate}

 In step 1, the  distances are $d(a_0^{\{o,e\}, \omega_1})=0.67$. 
In step 2, $d(a_0^e, \omega_1)=0.67, d(a_0^e, \omega_2)=1$, all the other distances in the parallel coupling are $1$. 
In step 3,  $d(\omega_1,s)=d( \omega_1, a_1^{\{o,e\}}) = 1$. 
In step 4, the distances in the parallel coupling is $1$ by Lemma \ref{lem:parralel}, $d(\omega_2, s) = 0.11$. 
In step 5, the distances between all points in $A_k$ and $\omega_2$ is at most $1$, $d(\omega_2, x_2) = 1$. 
In step 6 and 7, all the distances are no greater than $1$. 
In each step, the distance between each pair of points in $P$ and $Q$ is no greater than $1$. 

Therefore, $\delta_{dF}(P, Q)\leq 1$. 
\end{proof}

\subsection{Proof for Lemma \ref{lem:perp}}
\begin{proof}
\label{proof:perp}
In a coupling $\beta$, when $P$ goes to $x_1$, if $Q$ goes to $b_0^e, b_1^e, \omega_2$, $\delta(\beta)\geq 1.61$, since $d(x_1,b_0^e)=2.66, d(x_1,b_1^e)=1.61, d(x_1, \omega_2)=1.61 $. Then we are done.
If $Q$ goes to $b_0^o, b_1^o$, $d(x_1, b_0^o) =0.49, d(x_1, b_1^o) =1.13$. In the next step, either $P$ goes to $s$ or $Q$ goes to $b_1^e,b_0^e$ (or both). 
Since $d(s, b_0^{\{o,e\}}) = 1.70, d(s, b_1^{\{o,e\}})=1.97$, it also implies $\delta(\beta)>1.61$. 

Now we analyse the case that $P$ goes to $x_1$, and $Q$ goes to $\omega_1$. Consider the case when $Q$ leaves $\omega_1$ and goes to $b_{\{0,1\}}^o$. $P$ must be at $a_0^o, a_1^o$, otherwise, we would get $\delta(\beta)>1.61$. If the point $P$ goes to belongs to the sequence of $W$, it means $P$ has already passed $x_2$ when $Q$ stays at $\omega_1$. $d(x_2, \omega_1)=1.89$. $\delta(\beta)>1.61$. 
Now consider the case that the point $P$ goes to belongs to $A_i$. $P$ and $Q$ must go simultaneously in the same side below or above the horizontal axis. Otherwise, $\delta(\beta)>1.61$. If the coupling is not a parallel coupling between some $A_i$ and $B_j$, then $P$ must first arrive at $s$. The distance between $s$ and any point in $B_j$ is greater than $1.61$. $\delta(\beta)>1.61$. If the coupling is a parallel coupling, we still have $\delta(\beta)>1.61$
by Lemma \ref{lem:parralel}, since no $u_i\perp v_j$. 

All above, we have $\delta_{dF}(P,Q)\geq 1.61$. 
\end{proof}

\bibliography{frechet,privacy,newprivacy,r-gather,minhash,jie}

\begin{thebibliography}{10}

\bibitem{agarwal2014computing}
Pankaj~K Agarwal, Rinat~Ben Avraham, Haim Kaplan, and Micha Sharir.
\newblock Computing the discrete {F}r{\'e}chet distance in subquadratic time.
\newblock {\em SIAM Journal on Computing}, 43(2):429--449, 2014.

\bibitem{agarwal2018subtrajectory}
Pankaj~K Agarwal, Kyle Fox, Kamesh Munagala, Abhinandan Nath, Jiangwei Pan, and
  Erin Taylor.
\newblock Subtrajectory clustering: Models and algorithms.
\newblock In {\em Proceedings of the 35th ACM SIGMOD-SIGACT-SIGAI Symposium on
  Principles of Database Systems}, pages 75--87. ACM, 2018.

\bibitem{aggarwal2006achieving}
Gagan Aggarwal, Tom{\'a}s Feder, Krishnaram Kenthapadi, Samir Khuller, Rina
  Panigrahy, Dilys Thomas, and An~Zhu.
\newblock Achieving anonymity via clustering.
\newblock In {\em Proceedings of the twenty-fifth ACM SIGMOD-SIGACT-SIGART
  symposium on Principles of database systems}, pages 153--162. ACM, 2006.

\bibitem{alt1995computing}
Helmut Alt and Michael Godau.
\newblock Computing the {F}r{\'e}chet distance between two polygonal curves.
\newblock {\em International Journal of Computational Geometry \&
  Applications}, 5(01n02):75--91, 1995.

\bibitem{alt2004comparison}
Helmut Alt, Christian Knauer, and Carola Wenk.
\newblock Comparison of distance measures for planar curves.
\newblock {\em Algorithmica}, 38(1):45--58, 2004.

\bibitem{aronov2006frechet}
Boris Aronov, Sariel Har-Peled, Christian Knauer, Yusu Wang, and Carola Wenk.
\newblock Fr{\'e}chet distance for curves, revisited.
\newblock In {\em European Symposium on Algorithms}, pages 52--63. Springer,
  2006.

\bibitem{Ayres:2002:SPM:775047.775109}
Jay Ayres, Jason Flannick, Johannes Gehrke, and Tomi Yiu.
\newblock Sequential pattern mining using a bitmap representation.
\newblock In {\em Proceedings of the Eighth ACM SIGKDD International Conference
  on Knowledge Discovery and Data Mining}, KDD '02, pages 429--435, 2002.

\bibitem{berndt1994using}
Donald~J Berndt and James Clifford.
\newblock Using dynamic time warping to find patterns in time series.
\newblock In {\em KDD workshop}, pages 359--370. Seattle, WA, 1994.

\bibitem{bringmann2014walking}
Karl Bringmann.
\newblock Why walking the dog takes time: Fr{\'e}chet distance has no strongly
  subquadratic algorithms unless seth fails.
\newblock In {\em Foundations of Computer Science (FOCS), 2014 IEEE 55th Annual
  Symposium on}, pages 661--670. IEEE, 2014.

\bibitem{bringmann2015approximability}
Karl Bringmann and Wolfgang Mulzer.
\newblock Approximability of the discrete {F}r{\'e}chet distance.
\newblock {\em Journal of Computational Geometry}, 7(2):46--76, 2015.

\bibitem{buchin2017clustering}
Kevin Buchin, Maike Buchin, David Duran, Brittany~Terese Fasy, Roel Jacobs,
  Vera Sacristan, Rodrigo~I Silveira, Frank Staals, and Carola Wenk.
\newblock Clustering trajectories for map construction.
\newblock In {\em Proceedings of the 25th ACM SIGSPATIAL International
  Conference on Advances in Geographic Information Systems}, page~14. ACM,
  2017.

\bibitem{buchin2010constrained}
Kevin Buchin, Maike Buchin, and Joachim Gudmundsson.
\newblock Constrained free space diagrams: a tool for trajectory analysis.
\newblock {\em International Journal of Geographical Information Science},
  24(7):1101--1125, 2010.

\bibitem{buchin2008detecting}
Kevin Buchin, Maike Buchin, Joachim Gudmundsson, Maarten L{\"o}ffler, and Jun
  Luo.
\newblock Detecting commuting patterns by clustering subtrajectories.
\newblock In {\em International Symposium on Algorithms and Computation}, pages
  644--655. Springer, 2008.

\bibitem{buchin2007difficult}
Kevin Buchin, Maike Buchin, Christian Knauer, G{\"u}nter Rote, and Carola Wenk.
\newblock How difficult is it to walk the dog.
\newblock In {\em Proc. 23rd Euro. Workshop on Comput. Geom}, pages 170--173.
  Citeseer, 2007.

\bibitem{buchin2014four}
Kevin Buchin, Maike Buchin, Wouter Meulemans, and Wolfgang Mulzer.
\newblock Four soviets walk the dog: with an application to alt's conjecture.
\newblock In {\em Proceedings of the Twenty-Fifth Annual ACM-SIAM Symposium on
  Discrete Algorithms}, pages 1399--1413. Society for Industrial and Applied
  Mathematics, 2014.

\bibitem{chambers2010homotopic}
Erin~Wolf Chambers, Eric~Colin De~Verdiere, Jeff Erickson, Sylvain Lazard,
  Francis Lazarus, and Shripad Thite.
\newblock Homotopic fr{\'e}chet distance between curves or, walking your dog in
  the woods in polynomial time.
\newblock {\em Computational Geometry}, 43(3):295--311, 2010.

\bibitem{chen2004marriage}
Lei Chen and Raymond Ng.
\newblock On the marriage of lp-norms and edit distance.
\newblock In {\em Proceedings of the Thirtieth international conference on Very
  large data bases-Volume 30}, pages 792--803. VLDB Endowment, 2004.

\bibitem{chen2005robust}
Lei Chen, M~Tamer {\"O}zsu, and Vincent Oria.
\newblock Robust and fast similarity search for moving object trajectories.
\newblock In {\em Proceedings of the 2005 ACM SIGMOD international conference
  on Management of data}, pages 491--502. ACM, 2005.

\bibitem{Cole1987Slowing}
Richard Cole.
\newblock Slowing down sorting networks to obtain faster sorting algorithms.
\newblock {\em J. ACM}, 34(1):200--208, January 1987.

\bibitem{citeulike:12207609}
Yves-Alexandre de~Montjoye, C\'{e}sar~A. Hidalgo, Michel Verleysen, and
  Vincent~D. Blondel.
\newblock {Unique in the Crowd: The privacy bounds of human mobility}.
\newblock {\em Scientific Reports}, 3, March 2013.

\bibitem{driemel2012approximating}
Anne Driemel, Sariel Har-Peled, and Carola Wenk.
\newblock Approximating the {F}r{\'e}chet distance for realistic curves in near
  linear time.
\newblock {\em Discrete \& Computational Geometry}, 48(1):94--127, 2012.

\bibitem{Eagle08092009}
Nathan Eagle, Alex~(Sandy) Pentland, and David Lazer.
\newblock Inferring friendship network structure by using mobile phone data.
\newblock {\em Proceedings of the National Academy of Sciences},
  106(36):15274--15278, 2009.

\bibitem{eiter1994computing}
Thomas Eiter and Heikki Mannila.
\newblock Computing discrete {F}r{\'e}chet distance.
\newblock Technical report, Citeseer, 1994.

\bibitem{garey2002computers}
Michael~R Garey and David~S Johnson.
\newblock {\em Computers and intractability}, volume~29.
\newblock W.H. Freeman New York, 2002.

\bibitem{Giannotti:2007ha}
Fosca Giannotti, Mirco Nanni, Fabio Pinelli, and Dino Pedreschi.
\newblock {Trajectory pattern mining.}
\newblock {\em KDD}, pages 330--339, 2007.

\bibitem{gonzalez08understanding}
Marta~C. Gonz\'{a}lez, C\'{e}sar~A. Hidalgo, and Albert-La\'{s}zl\'{o}
  Barab\'{a}si.
\newblock Understanding individual human mobility patterns.
\newblock {\em Nature}, 453, June 2008.

\bibitem{Gudmundsson:2018:FFD}
Joachim Gudmundsson, Majid Mirzanezhad, Ali Mohades, and Carola Wenk.
\newblock Fast fr\'{E}chet distance between curves with long edges.
\newblock In {\em Proceedings of the 3rd International Workshop on Interactive
  and Spatial Computing}, IWISC '18, pages 52--58, New York, NY, USA, 2018.
  ACM.

\bibitem{gudmundsson2013algorithms}
Joachim Gudmundsson, Marc van Kreveld, and Frank Staals.
\newblock Algorithms for hotspot computation on trajectory data.
\newblock In {\em Proceedings of the 21st ACM SIGSPATIAL International
  Conference on Advances in Geographic Information Systems}, pages 134--143.
  ACM, 2013.

\bibitem{914830}
Jiawei Han, Jian Pei, Behzad Mortazavi-Asl, Helen Pinto, Qiming Chen, Umeshwar
  Dayal, and MC~Hsu.
\newblock Prefixspan: Mining sequential patterns efficiently by
  prefix-projected pattern growth.
\newblock In {\em proceedings of the 17th international conference on data
  engineering}, pages 215--224, 2001.

\bibitem{wenk2010geodesic}
Atlas F.~Cook IV and Carola Wenk.
\newblock Geodesic {F}r{\'e}chet distance inside a simple polygon.
\newblock {\em ACM Transactions on Algorithms (TALG)}, 7(1):9, 2010.

\bibitem{Jeung:2008fz}
Hoyoung Jeung, Heng~Tao Shen, and Xiaofang Zhou.
\newblock {Convoy Queries in Spatio-Temporal Databases}.
\newblock {\em 2008 IEEE 24th International Conference on Data Engineering
  (ICDE 2008)}, pages 1457--1459, 2008.

\bibitem{Jeung:2008wn}
Hoyoung Jeung, Man~Lung Yiu, Xiaofang Zhou, Christian~S Jensen, and Heng~Tao
  Shen.
\newblock Discovery of convoys in trajectory databases.
\newblock In {\em Proceedings of the VLDB Endowment}, pages 1068--1080. VLDB
  Endowment, 2008.

\bibitem{5560657}
Jae-Gil Lee, Jiawei Han, Xiaolei Li, and Hong Cheng.
\newblock Mining discriminative patterns for classifying trajectories on road
  networks.
\newblock {\em Knowledge and Data Engineering, IEEE Transactions on},
  23(5):713--726, May 2011.

\bibitem{Liu:2012gz}
Yunhao Liu, Yiyang Zhao, Lei Chen, Jian Pei, and Jinsong Han.
\newblock {Mining frequent trajectory patterns for activity monitoring using
  radio frequency tag arrays}.
\newblock {\em IEEE Transactions on Parallel and Distributed Systems},
  23(11):2138--2149, October 2012.

\bibitem{5560659}
E.H.-C. Lu, V.S. Tseng, and P.S. Yu.
\newblock Mining cluster-based temporal mobile sequential patterns in
  location-based service environments.
\newblock {\em Knowledge and Data Engineering, IEEE Transactions on},
  23(6):914--927, June 2011.

\bibitem{Ma:2010:PVP:1859995.1860017}
Chris~Y.T. Ma, David~K.Y. Yau, Nung~Kwan Yip, and Nageswara~S.V. Rao.
\newblock Privacy vulnerability of published anonymous mobility traces.
\newblock In {\em Proceedings of the Sixteenth Annual International Conference
  on Mobile Computing and Networking}, MobiCom '10, pages 185--196, New York,
  NY, USA, 2010. ACM.

\bibitem{MAHESHWARI2011110}
Anil Maheshwari, Jörg-Rüdiger Sack, Kaveh Shahbaz, and Hamid Zarrabi-Zadeh.
\newblock Fréchet distance with speed limits.
\newblock {\em Computational Geometry}, 44(2):110 -- 120, 2011.
\newblock Special issue of selected papers from the 21st Annual Canadian
  Conference on Computational Geometry.

\bibitem{Mokbel07}
M.~F. Mokbel, C.~Y. Chow, and W.~G. Aref.
\newblock The new casper: A privacy-aware location-based database server.
\newblock {\em 2007 IEEE 23rd International Conference on Data Engineering},
  pages 1499--1500, 2007.

\bibitem{sakoe1978dynamic}
Hiroaki Sakoe and Seibi Chiba.
\newblock Dynamic programming algorithm optimization for spoken word
  recognition.
\newblock {\em IEEE transactions on acoustics, speech, and signal processing},
  26(1):43--49, 1978.

\bibitem{song2010limits}
Chaoming Song, Zehui Qu, Nicholas Blumm, and Albert-László Barabási.
\newblock Limits of predictability in human mobility.
\newblock {\em Science}, 327(5968):1018--1021, 2010.

\bibitem{Srikant96miningsequential}
Ramakrishnan Srikant and Rakesh Agrawal.
\newblock Mining sequential patterns: Generalizations and performance
  improvements.
\newblock In {\em Proceeding of the 5th international conference on extending
  database technology (EDBT’96)}, pages 3--17, 1996.

\bibitem{Sweeney:2002:KAM:774544.774552}
Latanya Sweeney.
\newblock K-anonymity: A model for protecting privacy.
\newblock {\em Int. J. Uncertain. Fuzziness Knowl.-Based Syst.},
  10(5):557--570, October 2002.

\bibitem{williams2004new}
Ryan Williams.
\newblock A new algorithm for optimal constraint satisfaction and its
  implications.
\newblock In {\em International Colloquium on Automata, Languages, and
  Programming}, pages 1227--1237. Springer, 2004.

\bibitem{yuan2014measuring}
Yihong Yuan and Martin Raubal.
\newblock Measuring similarity of mobile phone user trajectories--a
  spatio-temporal edit distance method.
\newblock {\em International Journal of Geographical Information Science},
  28(3):496--520, 2014.

\bibitem{Zeng:2017:MRD}
Jiemin Zeng, Gaurish Telang, Matthew~P. Johnson, Rik Sarkar, Jie Gao, Esther~M.
  Arkin, and Joseph S.~B. Mitchell.
\newblock Mobile r-gather: Distributed and geographic clustering for location
  anonymity.
\newblock In {\em Proceedings of the 18th ACM International Symposium on Mobile
  Ad Hoc Networking and Computing}, Mobihoc '17, pages 7:1--7:10, New York, NY,
  USA, 2017. ACM.

\bibitem{Zheng:2009:MIL:1526709.1526816}
Yu~Zheng, Lizhu Zhang, Xing Xie, and Wei-Ying Ma.
\newblock Mining interesting locations and travel sequences from gps
  trajectories.
\newblock In {\em Proceedings of the 18th International Conference on World
  Wide Web}, WWW '09, pages 791--800, 2009.

\bibitem{mobicamp16}
Mengyu Zhou, Kaixin Sui, Minghua Ma, Youjian Zhao, Dan Pei, and Thomas
  Moscibroda.
\newblock Mobicamp: A campus-wide testbed for studying mobile physical
  activities.
\newblock In {\em Proceedings of the 3rd International on Workshop on Physical
  Analytics}, WPA '16, pages 1--6, New York, NY, USA, 2016. ACM.

\end{thebibliography}
\end{document}